\newtheorem{theorem}{Theorem}[section]
\newtheorem{proposition}[theorem]{Proposition}
\newtheorem{lemma}[theorem]{Lemma}
\theoremstyle{definition}
\newtheorem{algorithm}[theorem]{Algorithm}
\newtheorem{problem}[theorem]{Problem}
\newtheorem{remark}[theorem]{Remark}
\newenvironment{mybox}
{\center \noindent\begin{boxedminipage}{1.0\linewidth}}
{\end{boxedminipage}
\noindent
}
\newcommand{\savehyperref}[2]{\texorpdfstring{\hyperref[#1]{#2}}{#2}}
\newcommand{\Sref}[1]{\hyperref[#1]{\S\ref*{#1}}}
\renewcommand{\leq}{\leqslant}
\renewcommand{\geq}{\geqslant}
\newcommand{\mper}{\,.}
\newcommand{\cG}{\mathcal G}
\newcommand{\cI}{\mathcal I}
\newcommand{\cN}{\mathcal N}
\newcommand{\Brac}[1]{\left[#1 \right]}
\newcommand{\set}[1]{\{#1\}}
\newcommand{\Abs}[1]{\left\lvert#1\right\rvert}
\newcommand{\ceil}[1]{\lceil #1 \rceil}
\newcommand{\floor}[1]{\lfloor #1 \rfloor}
\newcommand{\norm}[1]{\lVert#1\rVert}
\newcommand{\defeq}{\stackrel{\textup{def}}{=}}
\newcommand{\inprod}[1]{\langle #1\rangle}
\newcommand{\Z}{{\mathbb Z}}
\newcommand{\R}{\mathbb R}
\newcommand{\Esymb}{\mathbb{E}}
\newcommand{\Psymb}{\mathbb{P}}
\DeclareMathOperator*{\E}{\Esymb}
\DeclareMathOperator*{\ProbOp}{\Psymb}
\newcommand{\given}{\mathrel{}\middle|\mathrel{}}
\newcommand{\Given}{\given}
\newcommand{\Ex}[1]{\E\Brac{#1}}
\renewcommand{\Pr}[1]{\ProbOp\Brac{#1}}
\newcommand{\e}{\epsilon}
\definecolor{DSgray}{cmyk}{0,0,0,0.7}
\let\e\varepsilon
\newcommand{\bigO}{\mathcal{O}}
\newcommand{\U}{ \bar{u}}
\newcommand{\V}{ \bar{v}}
\newcommand{\phik}{{\phi^k_G}}
\newcommand{\skp}{{\sf sparsest}\ $k$-{\sf partition} }
\title{Approximation Algorithm for Sparsest $k$-Partitioning}
\author{Anand Louis 
\thanks{Supported in part by NSF awards CCF-0915903 and CCF-1217793. }
\\ Georgia Tech \\ anandl@gatech.edu%% 
 \and 
Konstantin Makarychev \\ Microsoft Research \\ komakary@microsoft.com }
\date{}
\begin{document}
\begin{titlepage}

\maketitle

\begin{abstract}
Given a graph $G$, the sparsest-cut problem asks to find the set of vertices $S$ which
has the least expansion defined as
$$\phi_G(S) \defeq \frac{w(E(S,\bar{S}))}{ \min \set{w(S), w(\bar{S})}  } , $$
where $w$ is the total edge weight of a subset.
Here we study the natural generalization of this problem: given an integer $k$,
compute a $k$-partition $\set{P_1, \ldots, P_k}$ of the vertex set so as to minimize 
$$  \phik(\set{P_1, \ldots, P_k}) \defeq \max_i \phi_G(P_i). $$
Our main result is a polynomial time bi-criteria approximation algorithm which outputs
a $(1 - \e)k$-partition of the vertex set such that each piece has expansion at 
most $\bigO_{\e}(\sqrt{\log n \log k})$ times $OPT$. 
We also study balanced versions of this problem.

\end{abstract}

\end{titlepage}

\section{Introduction}

The Sparsest Cut problem asks to find a subset $S$ of vertices of a given graph $G=(V,E)$ such that
the total weight of edges leaving it is as small as possible compared to its size. This latter quantity, called
expansion or conductance, is defined as:
$$ \phi_G(S) = \frac{|E(S,\bar{S})|}{w(S)},$$
where $|E(S,\bar{S})|$ is the size of the cut $E(S,\bar{S})$ and $w(S)=\sum_{u\in S} w_s$ is the weight of vertices in $S$.
Typically, the weight of each vertex $u$ is its degree $d_u$. The optimal value is called the expansion of the graph $G$, and
is denoted by $\phi_G$:
$$ \phi_G = \min_{S : w(S) \leq w(V)/2 } \phi_G(S).$$

The Sparsest Cut problem has been highly influential
in the study of algorithms and complexity. 
In their seminal work, Leighton and Rao~\cite{lr99} gave a $\bigO(\log n)$ approximation algorithm
for the problem. Later, Linial, London, Rabinovich~\cite{llr95} and Aumann, Rabani~\cite{ar98} gave $\bigO(\log n)$
approximation
for Sparsest Cut with non-uniform demands and established a connection between the Sparsest Cut problem
and embeddings of metric spaces into $\ell_2$.
In a breakthrough result, Arora, Rao, and Vazirani~\cite{arv04}
gave $\bigO(\sqrt{\log n})$ approximation algorithm for the problem. Given a graph with optimal sparsest cut $OPT=\phi_G$, 
their algorithm returns a set $S_{A}$ with $\phi_G(S_{A}) = \bigO(\sqrt{\log n}\, OPT)$. 

The fundamental Cheeger's inequality (shown for graphs in \cite{a86, am85}) establishes a
bound on expansion via the spectrum of the graph. In particular, it states that a natural 
spectral algorithm gives a solution of cost $\phi_G(S_{A}) = \bigO(\sqrt{OPT})$. (Note that $OPT \leq 1$, so 
$\sqrt{OPT} \geq OPT$.)

Many extensions of this problem have been considered in the literature
(see \prettyref{sec:related} for a brief survey).
In this work, we study a very natural extension of the Sparsest Cut problem -- the Sparsest $k$-Partitioning problem.

\begin{problem}[Sparsest $k$-Partitioning Problem]
Given a graph $G = (V,E)$ and a parameter $k$, compute a partition $\set{P_1, \ldots, P_k}$ of $V$  
into $k$ non-empty pieces so as to minimize 
\[ \phik(\set{P_1, \ldots, P_k}) =  \max_i \phi_G(P_i). \]
The optimal value $OPT$ is called the $k$-sparsity and denoted by $\phi^k_G$.
\end{problem}

The problem gained prominence because of its close connection to the graph spectrum. 
This connection was established and studied in the recent works~\cite{lot12, lrtv12}
which motivate the study of $\phik$ as a combinatorial analogue of $\lambda_k$
(the $k$-th smallest eigenvalue of the normalized Laplacian of the graph $G$).
Lee, Gharan and Trevisan~\cite{lot12} showed that there exists a $k$-partition 
$\set{P_1, \ldots, P_k}$ such that $\phik(\set{P_1, \ldots, P_k}) \leq \bigO( k^3 \sqrt{\lambda_k})$.
Louis, Raghavendra, Tetali, Vempala~\cite{lrtv12}
showed that for any $k$ non-empty disjoint subset $S_1, \ldots, S_k$ of $V$, we have $\max_i \phi(S_i) \geq \Omega(\lambda_k)$.
Moreover, they showed that for some absolute constant $c$, 
there exists $k$ disjoint non-empty sets $S_1, \ldots S_{k} \subset V(G)$ such that 
$ \max_i \phi(S_i) \leq \bigO(\sqrt{ \lambda_{ck} \log k })$. Lee et al.~\cite{lot12} proved a similar result 
with $c = 1 +\e$ for any $\e > 0$. 
Note that in both these results, the sets $\set{S_i : i \in [k]}$
need not form a partition of the vertex set of the graph. 
As a by-product of our main result, we 
slightly strengthen the results above. 

\begin{proposition}
\label{prop:cheeger-partition}
Given a graph $G$ and a parameter $k$,
\[ \lambda_k \leq \phik \leq \bigO_{\e}\left(\sqrt{ \lambda_{(1+\e)k} \log k }\right)\]
for every $\e>0$. Here and below, $\bigO_{\e}(f)$ denotes $\bigO({\sf poly}(1/\e) f)$. 
\end{proposition}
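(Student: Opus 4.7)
The plan is to handle the two inequalities in the proposition separately.

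For the lower bound $\lambda_k \leq \phik$, I would use the Courant--Fischer (min-max) characterization of the $k$-th smallest generalized eigenvalue of the pair $(L, D)$ (combinatorial Laplacian and diagonal degree matrix). Fix an optimal $k$-partition $\set{P_1, \ldots, P_k}$ realizing $\phik$ and take the $k$-dimensional test subspace $W \defeq \mathrm{span}(\mathbf{1}_{P_1}, \ldots, \mathbf{1}_{P_k})$ of functions constant on each part. For $f = \sum_i c_i \mathbf{1}_{P_i} \in W$, one computes $f^T L f = \sum_{i<j} w(E(P_i, P_j))(c_i - c_j)^2$ and $f^T D f = \sum_i c_i^2\, w(P_i)$; combining $(c_i - c_j)^2 \leq 2(c_i^2 + c_j^2)$ with $\sum_{j \neq i} w(E(P_i, P_j)) = \phi_G(P_i)\, w(P_i) \leq \phik \cdot w(P_i)$ yields $f^T L f \leq 2\phik \cdot f^T D f$, so $\lambda_k \leq 2\phik$ (the factor of two is absorbed into the stated inequality).

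For the upper bound $\phik \leq \bigO_\e(\sqrt{\lambda_{(1+\e)k}\log k})$, I would invoke the higher-order Cheeger inequality of Lee, Gharan and Trevisan~\cite{lot12}, but with parameter $k' \defeq \roundup{(1+\e)k}$ (and a correspondingly shrunk $\e$), producing $k'$ disjoint nonempty subsets $S_1, \ldots, S_{k'}$ with $\phi^* \defeq \max_i \phi_G(S_i) \leq \bigO_\e(\sqrt{\lambda_{(1+\e)k}\log k})$. I would then convert these into a $k$-partition by relabeling so that $w(S_1) \leq \cdots \leq w(S_{k'})$ and setting $P_i \defeq S_i$ for $i < k$ and $P_k \defeq V \setminus \bigcup_{i<k} S_i$. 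Clearly $\phi_G(P_i) \leq \phi^*$ for each $i < k$. For $P_k$, two observations close the argument: choosing the $k-1$ lightest among $k' \geq (1+\e)k$ disjoint subsets forces (by averaging) $\sum_{i<k} w(S_i) \leq \tfrac{k-1}{k'} w(V) \leq \tfrac{w(V)}{1+\e}$, hence $w(P_k) \geq \tfrac{\e}{1+\e}w(V)$; and the cut $E(P_k, \bar{P}_k)$ is contained in $\bigcup_{i<k} E(S_i, \bar{S}_i)$, so $w(E(P_k, \bar{P}_k)) \leq \phi^* \sum_{i<k} w(S_i)$. Dividing the two bounds gives $\phi_G(P_k) \leq \phi^*/\e = \bigO_\e(\phi^*)$, as required.

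The main obstacle lies in the upper bound: the higher-order Cheeger inequality produces merely disjoint subsets rather than a partition, and the leftover vertices $V \setminus \bigcup_i S_i$ can in principle carry many edges to the $S_i$'s and blow up the expansion of whichever piece absorbs them. The resolution (and the point of invoking the inequality with $(1+\e)k$ rather than $k$ subsets) is that the averaging argument above sends a $\Omega(\e)$-fraction of the total weight into $P_k$, keeping $\phi_G(P_k)$ within an $\e$-dependent multiplicative factor of $\phi^*$.
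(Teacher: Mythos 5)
Your upper bound argument is essentially the paper's own proof: apply the higher-order Cheeger inequality of Lee--Gharan--Trevisan with $(1+\e)k$ disjoint sets, keep the lightest ones, and absorb the heavy sets and uncovered vertices into a single residual piece whose weight is $\Omega_\e(w(V))$ and whose cut edges are a subset of $\bigcup_i E(S_i,\bar S_i)$. (Your bookkeeping is slightly cleaner than the paper's --- the paper as written takes the smallest $k$ sets plus the complement, giving $k+1$ parts, while you correctly take the smallest $k-1$ plus the complement.)

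On the lower bound, the paper's proof actually says nothing --- it silently relies on the easy direction of higher-order Cheeger, citing Louis--Raghavendra--Tetali--Vempala, who prove $\max_i \phi(S_i) \geq \Omega(\lambda_k)$. You supply the standard Courant--Fischer argument with the test space $\mathrm{span}(\mathbf{1}_{P_1},\dots,\mathbf{1}_{P_k})$, which is the right proof, but be aware that it genuinely gives $\lambda_k \leq 2\phik$ and the factor of $2$ cannot be removed: for $G=K_2$ and $k=2$ one has $\lambda_2 = 2$ while $\phi_G^2 = 1$. So the literal inequality $\lambda_k \leq \phik$ in the proposition is off by a constant; your statement ``the factor of two is absorbed'' should more honestly read that the proposition is implicitly up to a universal constant on the left-hand side (as indeed the cited source only claims $\Omega(\lambda_k)$). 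This is an imprecision in the paper's statement, not a flaw in your argument.
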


No approximation algorithm for $\phik$ with a multiplicative guarantee was known prior to our work.
In this paper, we prove the following theorems.

\begin{theorem}
\label{thm:kpart-alg-1}
There exists a probabilistic polynomial-time algorithm that 
given an undirected graph $G=(V,E)$ with arbitrary vertex weights $w_u$ and parameters $k\in \Z^+$ $(k\geq 2)$, $\e>0$,
outputs $k'\geq (1 - \e)k$ partition such that each set has expansion at most
$\bigO_{\e}
\left(\sqrt{\log n \log k} \; OPT \right)$ w.h.p. Here $OPT$ is the cost of the optimal solution for the Sparsest $k$-Partitioning
problem.
\end{theorem}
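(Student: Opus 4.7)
The plan is to follow the SDP-plus-rounding template that has driven recent progress on multi-cut problems (ARV~\cite{arv04}, LRTV~\cite{lrtv12}, LOT~\cite{lot12}): write a vector relaxation with $k$ orthogonal ``coordinates'' for the parts, reduce dimension by a random Gaussian projection, and then perform an ARV-style threshold rounding piece-wise. Concretely, for every vertex $v$ and part $i\in[k]$ I would introduce a vector $u_v^{(i)}$, enforcing (a) the probability-distribution constraint $\sum_{i=1}^k \snorm{u_v^{(i)}}=1$; (b) the cross-part orthogonality $\inprod{u_v^{(i)}, u_w^{(j)}}=0$ for $i\neq j$; and (c) the squared-$\ell_2$ triangle inequalities within each part. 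The objective is to minimize $\phi$ subject to the per-part SDP-expansion being at most $\phi$. Setting $u_v^{(i)}=\mathbf{1}[v\in P_i]\cdot e_i$ in an orthonormal frame $e_1,\dots,e_k$ shows that any integral $k$-partition is feasible, so the SDP optimum lower-bounds $OPT$.

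The rounding would proceed in two stages. First, apply a random Gaussian map $g:\R^d\to\R^m$ with $m=\Theta(\e^{-1}\log k)$ to all vectors; a standard Johnson--Lindenstrauss-style calculation shows that this preserves per-part squared norms and squared distances up to constant factors with high probability, and reduces each part's vectors to effective dimension $\log k$. Second, for each part $i$ independently, run the ARV ``well-separated sets'' machinery on the projected vectors $g(u_v^{(i)})$ to extract a candidate set $S_i$ whose expansion is $\bigO(\sqrt{\log n})$ times the SDP expansion of that part. At this point the $S_i$'s need not be disjoint, but each is individually a good sparse cut inside its part.

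To convert the overlapping $S_i$'s into a $(1-\e)k$-partition I would use an orthogonal-separator style ``competition'' step. Each vertex $v$ is assigned to a single part through independent random thresholds applied to the projections --- for instance, $v$ joins the part with the largest $\snorm{g(u_v^{(i)})}$ that crosses a randomly chosen scale. Because $\sum_i \snorm{u_v^{(i)}}=1$ and because the cross-part orthogonality, together with the projection, concentrates each vertex's mass on a small number of parts, one should be able to show that at most an $\e$-fraction of the parts are ``killed'' by being emptied or shrunk below a viable threshold. The surviving $(1-\e)k$ sets are disjoint and cover almost all vertices; any leftover vertices can be absorbed into the surviving parts at negligible additional cost, yielding the required $k'\geq(1-\e)k$ partition.

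The main obstacle will be the simultaneous control of the per-part numerator (edges cut) and denominator (surviving weighted mass) under the competition rule. Each edge $(u,w)$ that is cut by the final partition falls into one of two categories: either it was already ``long'' in the SDP for some part $i$, in which case it is charged to the SDP objective via ARV and contributes the $\sqrt{\log n}$ factor; or it was short in every part but split by the competition between two pieces. Cross-part orthogonality is the lever that forces the second category to be rare, and its contribution has to be amortized across all $k$ parts via the $\log k$-dimensional random projection in order to obtain $\sqrt{\log n\cdot \log k}$ rather than the weaker $\sqrt{\log n}\cdot\log k$. Making this amortization precise, while also ensuring that $(1-\e)k$ parts retain enough mass to remain non-trivial expanding sets, is the delicate heart of the argument.
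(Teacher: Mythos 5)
Your proposed relaxation is exactly the Assignment SDP that the paper shows, in \prettyref{app:ug-ig}, has an \emph{unbounded} integrality gap, so the approach fails before any rounding begins. Take $G_n$ to be two disjoint cliques $C_1,C_2$ each on $\approx n/2$ vertices and any $k>2$. Every genuine $k$-partition of $G_n$ must cut one of the cliques into pieces of bounded size, so $\phi^k_{G_n}=\Omega(1)$. Yet your SDP admits a feasible solution of value $0$: put $u_v^{(i)}=\sqrt{2/k}\,e_i$ for $v\in C_1$ and $i\le k/2$, $u_v^{(i)}=\sqrt{2/k}\,e_{i-k/2}$ for $v\in C_2$ and $i> k/2$, and $u_v^{(i)}=0$ otherwise. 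This satisfies $\sum_i\snorm{u_v^{(i)}}=1$, the cross-part orthogonality, and the $\ell_2^2$ triangle inequalities; since all vertices in the same clique share the same vector in every coordinate and the two cliques are disconnected, every per-part numerator $\sum_{(u,w)\in E}\snorm{u^{(i)}_u-u^{(i)}_w}$ vanishes while every denominator $\sum_v w_v\snorm{u_v^{(i)}}$ is positive. No Gaussian projection, ARV separated-set extraction, or competition/thresholding step can recover an $\bigO_\e(\sqrt{\log n\log k})$ approximation from a relaxation whose value can be $0$ when $OPT=\Omega(1)$. Notably, cross-part orthogonality --- which you designate as the ``lever'' preventing bad competition --- is exactly satisfied by the gap instance and offers no protection.

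The conceptual obstacle, which the paper identifies explicitly, is that the algorithm does not know the sizes of the parts in advance, and the Assignment SDP provides no way to encode them: it lets $k/2$ of the ``parts'' each fractionally impersonate the entire clique $C_1$. The paper's fix is to abandon the per-part vectors altogether and use a single vector $\U$ per vertex, normalized so that the intended integral solution has $\snorm{\U}=1/w(P_i)$ for $u\in P_i$. This yields the measure $\mu(S)=\sum_{u\in S}w_u\snorm{\U}$ with $\mu(V)=k$ and the spreading constraint $\sum_v w_v\inprod{\U,\V}=1$, which together force every part to have $\mu$-measure $\approx 1$ --- precisely the size information your relaxation lacks. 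The rounding then uses orthogonal separators (on the CMM-normalized vectors $\psi(\U)$) rather than per-part ARV, discards any separator with $\mu$-measure exceeding $1+\e/2$, and thresholds by $\snorm{\U}$ inside each surviving piece. So the difference is not cosmetic: you would need to replace the relaxation itself, not just refine the rounding, to have a chance of proving \prettyref{thm:kpart-alg-1}.
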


\begin{theorem}
\label{thm:kpart-alg-2}
There exists a probabilistic polynomial-time algorithm that 
given an undirected graph $G=(V,E)$ 
 with weights $w_u=d_u$ ($d_u$ is the degree of the vertex $u$) 
and parameters $k\in \mathbb{N}$ $(k\geq 2)$, $\e>0$,
outputs $k'\geq (1 - \e)k$ partition such that each set has expansion at most 
$\bigO_{\e} \left(\sqrt{OPT \log k} \right)$ w.h.p.
\end{theorem}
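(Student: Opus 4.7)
The plan is to follow the same overall strategy as \prettyref{thm:kpart-alg-1}, but to replace the Arora--Rao--Vazirani-style rounding step --- which loses a factor of $\sqrt{\log n}$ --- with a Cheeger-style spectral rounding that exploits the fact that the vertex weights are the degrees. This parallels the classical gap between ARV and spectral partitioning for ordinary sparsest cut ($\bigO(\sqrt{\log n}\,OPT)$ versus $\bigO(\sqrt{OPT})$), with the extra $\sqrt{\log k}$ arising from producing $k$ pieces rather than two, in the spirit of the higher-order Cheeger inequalities of \cite{lot12, lrtv12}.

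Concretely, I would first solve an SDP relaxation of sparsest $k$-partitioning in which each vertex $u$ is assigned $k$ vectors $X_u^{(1)}, \ldots, X_u^{(k)}$ satisfying $\sum_i \snorm{X_u^{(i)}} = 1$ and $\inprod{X_u^{(i)}, X_u^{(j)}} = 0$ for $i \neq j$, together with the usual negative-type constraints. The integral solution induced by the optimal partition $\{P_1, \ldots, P_k\}$ is feasible and satisfies, for each $i$,
\[ \sum_{\{u,v\} \in E} \snorm{X_u^{(i)} - X_v^{(i)}} \;\leq\; OPT \cdot \sum_u d_u \snorm{X_u^{(i)}}. \]
By binary searching on a threshold $\tau$ and enforcing this inequality directly, we may assume the SDP returns vectors whose per-part degree-weighted Rayleigh quotient is at most $\tau = OPT$.

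The second step is to view the $k$ families $\{X_u^{(i)}\}_u$ as an orthogonal system of approximate eigenvectors in the degree inner product, each with Rayleigh quotient at most $OPT$. This is exactly the setting of higher-order Cheeger rounding: adapting the algorithm of Lee--Gharan--Trevisan, one first localizes the $k$ families to pairwise-disjoint regions of $V$ using a random partitioning in SDP-space, and then applies a Cheeger sweep cut within each localized family. Since within a single localized region the geometry is Cheeger-like, each sweep yields a subset $P_i$ with $\phi_G(P_i) \leq \bigO(\sqrt{OPT \log k})$, and by discarding at most $\e k$ of the families one can ensure that $(1 - \e)k$ non-empty disjoint vertex subsets survive.

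The main obstacle will lie in this rounding step: one must make the passage from the SDP vectors to ``near-eigenfunctions'' quantitatively tight enough that the higher-order rounding genuinely incurs only a $\sqrt{\log k}$ distortion rather than the $\sqrt{\log n}$ of \prettyref{thm:kpart-alg-1}, while simultaneously controlling how many parts are lost to the random localization. Tracking the degree-weighted Rayleigh quotient through the random partitioning and threshold-cut steps, and handling overlaps between SDP families so that at least $(1-\e)k$ of them contribute a non-trivial vertex set, is where the bulk of the technical work is expected to go.
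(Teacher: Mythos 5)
Your high-level intuition --- that the Cheeger-type bound $\bigO(\sqrt{OPT\log k})$ should come from replacing an ARV-type, $\sqrt{\log n}$-losing step with a Cheeger-style one --- matches the paper, but the SDP you propose to round is exactly the one the paper rules out. The assignment SDP with vectors $X_u^{(1)},\dots,X_u^{(k)}$ per vertex, the orthogonality constraints $\inprod{X_u^{(i)},X_u^{(j)}}=0$, the normalization $\sum_i\|X_u^{(i)}\|^2=1$, and a per-part Rayleigh-quotient bound is precisely \prettyref{sdp:assignment}, and the paper shows in \prettyref{app:ug-ig} that its integrality gap is unbounded: on two disjoint cliques one can set $X_u^{(i)}=\sqrt{2/k}\,e_i$ for all $u$ in the first clique (and analogously on the second), so every per-part Rayleigh quotient is $0$, yet $\phik=\Omega(1)$ for $k>2$ because every genuine $k$-partition must split a clique. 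Nothing in your Step~1 blocks this example --- not the negative-type constraints, not the threshold constraint $\tau$ --- so your binary search would accept $\tau=0$ and the subsequent ``higher-order Cheeger rounding'' would have nothing to certify. The gap you flag as a technical obstacle is in fact an information-theoretic one: with this relaxation there is no multiplicative signal to round.

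The paper instead keeps the novel single-vector SDP of \prettyref{sec:SDP} (whose spreading constraint $\sum_v w_v\inprod{\U,\V}=1$ kills the clique instance) and changes exactly one ingredient of the \prettyref{thm:kpart-alg-1} proof: the orthogonal separators of \prettyref{thm:orthsep}, which lose a factor $D=\bigO(\sqrt{\log n\log k})$ against $\|\U-\V\|^2$, are replaced by ``orthogonal separators with $\ell_2$ distortion'' (\prettyref{thm:ort-sep-prime}), which lose only $D=\bigO(\sqrt{\log k})$ against the un-squared $\|\U-\V\|$. The algorithm is otherwise \prettyref{alg:ksets} verbatim; the new work is in re-bounding $\Ex{\sum_i\nu(S_i'')}$. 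There a Cauchy--Schwarz step converts $\sum_{(u,v)\in E}\|\U-\V\|\bigl(\|\U\|+\|\V\|\bigr)$ into $\bigO\bigl(\sqrt{k\cdot SDP}\cdot\sqrt{\sum_{(u,v)\in E}(\|\U\|^2+\|\V\|^2)}\bigr)$, and the degree-weight assumption $w_u=d_u$ gives $\sum_{(u,v)\in E}(\|\U\|^2+\|\V\|^2)=\sum_u d_u\|\U\|^2=\mu(V)=k$, so the whole expression is $\bigO(Dk\sqrt{SDP})$. That single Cauchy--Schwarz application, together with $SDP\leq\sqrt{SDP}$ when $w_u=d_u$ (so $OPT\leq 1$), is where the Cheeger square root comes from --- not from a spectral sweep over a separate eigenvector-style relaxation.
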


\begin{remark}
Both theorems can be easily extended to edge-weighted graphs. 
W.l.o.g, we may assume that the weights of the edges are integers. 
The proofs of the theorems simply follow by introducing parallel edges: if $w_e \in \Z$ denotes the weight of an edge $e$,
we replace $e$ with $w_e$ unweighted parallel edges. 
No changes are needed in the algorithm, and the algorithm still runs in polynomial time.
\end{remark}

Note that for $k=2$, \prettyref{thm:kpart-alg-1} gives the same guarantee as that of 
Arora, Rao and Vazirani~\cite{arv04} for 
Sparsest Cut and \prettyref{thm:kpart-alg-2} gives the same guarantee as that of Cheeger's inequality \cite{am85,a86}
for Sparsest Cut. A direct corollary of the work of Raghavendra, Steurer and Tulsiani \cite{rst12} is that \prettyref{thm:kpart-alg-2}
is optimal under the SSE hypothesis. We refer the reader to \cite{rs10,rst12} for the statement and implications of the
SSE hypothesis.

\paragraph{SDP Relaxation.}
The proofs of our main theorems go via an SDP relaxation of $\phik$ and a rounding algorithm for it.
As a first attempt, one would try an assignment SDP \`{a} la Unique Games (as used in \cite{k02,t08,cmm06a,cmm06b}),
but such relaxations have a large integrality gap (see \prettyref{app:ug-ig}). 
The main difficulty in constructing an integer programming formulation of \skp is that 
we do not know the sizes of the sets in the optimal partition.
We use a novel SDP relaxation which gets around this obstacle.
In this SDP, we manage to encode a partitioning of the 
graph as well as a special measure on the vertices. This measure tells us how large every set must be. Roughly 
speaking, we expect that in the solution obtained by the algorithm, the measure of
every set is approximately $1$, irrespective of its size. We give a formal description of the SDP in \prettyref{sec:SDP}.

\subsection{Related Work}
\label{sec:related}
The Small Set Expansion problem (SSE) asks to find a set $S$ of weight at 
most $w(S)/k$ (where $k\geq 2$ is a parameter) with the smallest expansion $\phi_G(S)$. This problem got a 
lot of attention recently, partially because of the observed connection with the Unique Games Conjecture~\cite{rs10,abs10}
and because of a new Small Set Expansion Conjecture of Raghavendra and Steurer~\cite{rs10}. 
Raghavendra, Steurer, and Tetali~\cite{rst10}
gave an algorithm with a Cheeger--type approximation guarantee of 
$ \bigO(\sqrt{ OPT \log k})$ for this problem.
Bansal et al.~\cite{bfk11} gave a $\bigO(\sqrt{\log n \log k})$ approximation algorithm 
(i.e., $\phi_G(S_A) \leq \bigO(\sqrt{\log n \log k} \, OPT)$). 
The min-sum version of graph multi-partitioning has also been studied extensively, see e.g. \cite{kvv04,ar06,kns09,lrtv11}.

\paragraph{Comparison to Previous Work}
Bansal et al.~\cite{bfk11} studied the problem of partitioning the graph in $k$ equal pieces
while minimizing the largest edge boundary of the piece  (Min Max Graph Partitioning). They give a bi-criteria 
approximation algorithm 
where each set in the partition is of size at most $2n/k$ while approximating its 
edge boundary to within a $\bigO(\sqrt{\log n \log k})$ factor of $OPT$. 
This problem is somewhat related to ours. However, the crucial difference is that the optimal
solution to our problem may contain sets of very different sizes: large and small. This makes 
their algorithm and SDP relaxation non applicable in our settings.
Since the aim of Min Max Graph Partitioning is to make all edge boundaries small,
the algorithm of Bansal et al.~\cite{bfk11} 
may sometimes add very small sets $P_i$ to the partition being output. Such sets can have large expansion 
inspite of having a small edge boundary. 
The main challenge in Min-Max Graph Partitioning
is to find sets that (a) are of size at most $n/k$; and (b) cover all vertices (without these conditions,
Min Max Graph Partitioning admits a simple constant factor approximation).
In some sense, we need to ensure that the sets 
are {\em not too small} (rather than not too large), and hence the expansion is small. 
As Bansal et al., we also need to cover all vertices, but this is a relatively easy task in our case. In fact,
we first drop this condition altogether and find a collection of disjoint non-expanding sets; then 
we transform these sets into a partitioning.
We also note that our algorithm solves the SDP relaxation only once. 
This is again in contrast with ~\cite{bfk11}, where the SDP relaxation is actually just a relaxation for the SSE problem 
(and not for Min Max Graph Partitioning!). So the Min Max Graph Partitioning algorithm has to solve the SDP relaxation 
at least once for each set in the partitioning (in fact, $\bigO(\log n)$ times). 

As we note above a natural assignment SDP relaxation has a large integrality gap (see \prettyref{app:ug-ig}). To round our new SDP (see 
\prettyref{sec:SDP}), 
one can try to adopt the rounding algorithms of Lee et al.~\cite{lot12} and Louis et al.~\cite{lrtv12}%%
\footnote{Both \cite{lrtv12,lot12}  construct an embedding of the graph into $\R^k$ as a first step. 
The proofs of their main theorems can be viewed as an algorithm to round these vectors into sets. }.
However, these algorithms could only possibly give an approximation guarantee of the form $\bigO(\sqrt{OPT \log k})$. 
To get rid of the square root, we need to embed the SDP solution from $\ell^2_2$ to $\ell_2$. This step distorts the vectors, so that they no longer satisfy 
SDP constraints and no longer have properties required by these algorithms.

\subsection{Extensions}

Our SDP formulation and rounding algorithm can be used to solve other problems 
as well.
Consider the balanced version of Sparsest $k$-Partition.
\begin{problem}[Balanced Sparsest $k$-Partitioning Problem]
Given a graph $G = (V,E)$ and a parameter $k$, compute a partition $\set{P_1, \ldots, P_k}$ of $V$  
into $k$ non-empty pieces each of weight $w(G)/k$ so as to minimize 
\[ \phik(\set{P_1, \ldots, P_k}) =  \max_i \phi_G(P_i). \]
\end{problem}
Using our techniques, we can prove the following theorems. 
\begin{theorem}
\label{thm:bkpart-alg-1}
There exists a probabilistic polynomial-time algorithm that 
given an undirected graph $G=(V,E)$ with arbitrary vertex weights $w_u$ and parameters $k\in \mathbb{N}$ $(k\geq 2)$, $\e>0$,
outputs $k'\geq (1 - \e)k$ disjoint sets (not necessarily a partition) such that the weight of each set is in the range 
$[w(G)/(2k), (1+\e)w(G)/k]$,
and the expansion of each set is at most 
$\bigO_{\e} \left(\sqrt{\log n \log k} \; OPT \right)$ w.h.p.
\end{theorem}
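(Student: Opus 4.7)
The plan is to extend the SDP relaxation and rounding algorithm of \prettyref{thm:kpart-alg-1} by augmenting the SDP with explicit balance constraints. The SDP of \prettyref{sec:SDP} already encodes both a $k$-partitioning and a measure on the vertices; for the balanced version I would add linear constraints over the SDP vectors forcing the total vertex weight associated with each putative part to equal $w(G)/k$. Any integral balanced $k$-partition satisfies these constraints, so the augmented SDP remains a valid relaxation whose optimum is at most $OPT$.

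After solving the augmented SDP, I would run essentially the same rounding procedure as in \prettyref{thm:kpart-alg-1}. The rounding produces a collection of disjoint candidate sets, each of expansion $\bigO_{\e}(\sqrt{\log n \log k}\, OPT)$. The new ingredient is to analyze the vertex weights of these sets. The SDP balance constraint ensures that the expected weight of each output set is close to $w(G)/k$. I would handle deviations in two steps: sets whose weight exceeds $(1+\e) w(G)/k$ are truncated by peeling off excess vertices (at the cost of an $\e$-dependent increase in the expansion bound), while sets whose weight falls below $w(G)/(2k)$ are discarded entirely.

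The final step is to argue that the number of truncated or discarded sets is at most $\e k$, so that at least $(1-\e)k$ balanced sets survive. The upper-bound side follows from $\sum_i w(P_i) \leq w(G)$, which via a simple counting argument limits the number of very heavy sets. The lower-bound side follows from the SDP balance constraints combined with a concentration argument over the rounding randomness, showing that only a small fraction of rounded sets can have weight much smaller than the target $w(G)/k$.

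The main obstacle is this two-sided weight control. Unlike in \prettyref{thm:kpart-alg-1}, where small sets can be merged into the final partition so that coverage of $V$ is the only concern, here sets that are too small must be thrown away outright, so we cannot afford many of them. Establishing that, with high probability, at most $\e k$ of the rounded sets deviate substantially from weight $w(G)/k$---while preserving the $\bigO_{\e}(\sqrt{\log n \log k})$ expansion factor---requires leveraging the balance constraints of the augmented SDP more carefully than in the unbalanced case, and is where the bulk of the new work lies.
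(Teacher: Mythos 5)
Your high-level framing (augment the SDP with balance information, then round as before and control the weight of each output set) matches the paper's strategy, but the concrete mechanism you sketch diverges in two ways, and the divergence is exactly where you yourself admit the ``bulk of the new work lies.''

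First, the paper's balance constraint is simply $\|\U\|^2 = 1$ for all $u$ (after rescaling so $w(G) = k$). This is not an incidental choice: it makes the measure $\mu$ from \prettyref{eq:mu} \emph{identically equal} to vertex weight, so the filter already present in Step~4 of \prettyref{alg:ksets} (discard $S_i$ if $\mu(S_i) > 1 + \e/2$) automatically enforces your upper bound $w(P_i) \leq 1 + \e/2$. No post-hoc truncation of heavy sets is needed, and it's good that it isn't: peeling vertices off a low-expansion set does not in general preserve low expansion, so your proposed truncation step is not obviously sound.

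Second, and more importantly, you propose discarding sets that are too light and then proving via a ``concentration argument over the rounding randomness'' that only $\e k$ of them are discarded. The paper does not prove any such concentration bound; it sidesteps the issue entirely by \emph{merging} small sets rather than discarding them. Concretely, it modifies the choice of $r_i$ to additionally require $\mu(P_i) \geq (1 - \e/2)\mu(S''_i)$, which yields $\sum_{i \in \cI} w(P_i) \geq (1 - \e)k$ deterministically from \prettyref{eq:sum-lambda}. It then greedily merges sets of weight at most $1/2$, using the elementary fact that $\phi_G(A \cup B) \leq \max\{\phi_G(A), \phi_G(B)\}$ for disjoint $A, B$, so merging cannot increase the expansion bound. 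After merging, at most one leftover set has weight below $1/2$, and a simple counting argument on total weight gives $\geq (1 - 4\e)k$ surviving sets. Your route would require a genuinely new probabilistic lemma that the paper never needs; the merging trick, together with the constraint $\|\U\|^2 = 1$ and the modified $r_i$ selection, is the missing idea.
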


\begin{theorem}
\label{thm:bkpart-alg-2}
There exists a probabilistic polynomial-time algorithm that 
given an undirected graph $G=(V,E)$ 
with weights $w_u=d_u$ ($d_u$ is the degree of the vertex $u$) 
and parameters $k\in \mathbb{N}$ $(k\geq 2)$, $\e>0$,
outputs $k'\geq (1 - \e)k$ disjoint sets (not necessarily a partition) 
such that the weight of each set is in the range $[w(G)/(2k), (1+\e)w(G)/k]$,
and the expansion of each set is at most 
$\bigO_{\e} \left(\sqrt{OPT \log k} \right)$ w.h.p.
\end{theorem}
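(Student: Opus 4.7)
The plan is to start from the SDP relaxation introduced in Section~\ref{sec:SDP} that is used to prove \prettyref{thm:kpart-alg-2}, and augment it with linear constraints that control the vertex weight of each piece. Recall from the introduction that the SDP relaxation is designed so that, in an integer solution, the measure carried by every piece $P_i$ is approximately $1$ regardless of the actual size of $P_i$. For the balanced version it is therefore natural to force the SDP measure to coincide (up to constants) with the normalized degree weight. Concretely, I would add to the SDP the constraints
\[
    w(P_i) \le (1+\e) \tfrac{w(V)}{k}
    \qquad \text{(encoded at the vector level)}
\]
for each ``candidate center'' used in the relaxation. This is a valid relaxation because the optimal balanced partition automatically satisfies $w(P_i) = w(V)/k$. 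Since the original SDP of Section~\ref{sec:SDP} is already an inner-product SDP, these constraints can be encoded as linear inequalities on the squared norms of the vectors associated with each center, and the relaxation remains solvable in polynomial time.

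Next, I would run the rounding procedure from the proof of \prettyref{thm:kpart-alg-2} on this augmented SDP. That procedure, as discussed in \Sref{sec:related}, relies on a Cheeger-type (rather than ARV-type) analysis that yields the desired $\bigO_{\e}(\sqrt{\OPT \log k})$ bound when weights are given by degrees. The rounding outputs $k' \ge (1-\e)k$ disjoint sets, each whose expansion is bounded by $\bigO_{\e}(\sqrt{\OPT \log k})$ and each whose SDP-measure is close to $1$. By the new upper-bound constraint in the SDP, the total weight of each output set is automatically at most $(1+\e) w(V)/k$, so only the lower tail $w(P_i) \ge w(V)/(2k)$ needs to be argued separately.

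To obtain the lower bound on weight, I would use the following two-step post-processing. First, observe that if a returned set $P_i$ has weight $w(P_i) < w(V)/(2k)$, then its edge boundary $w(E(P_i,\bar P_i))$ is at most $\phi_G(P_i)\cdot w(P_i) \le \bigO_{\e}(\sqrt{\OPT \log k}) \cdot w(V)/(2k)$. This total boundary is absorbed into a global ``slack'' budget that is controlled by $\OPT$ itself (since the SDP objective is already $\OPT$). Second, charge the total weight of discarded sets against this slack to show that at most an $\e$-fraction of the sets need to be discarded, so we still keep $k'\ge (1-\e)k$ sets; the remaining sets have weight in $[w(V)/(2k), (1+\e)w(V)/k]$ as required.

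The main obstacle will be the rigorous lower-bound argument in the last step. In the unbalanced \prettyref{thm:kpart-alg-2} there is no reason to believe that the returned sets have large weight; the rounding only guarantees that each set has expansion at most $\bigO_{\e}(\sqrt{\OPT \log k})$ and that collectively they cover at least $(1-\e)k$ of the centers. To push this into a weight lower bound, one has to exploit the balance constraint in the SDP: the upper bound $w(P_i) \le (1+\e)w(V)/k$ combined with $k'\ge (1-\e)k$ total output sets forces most of the weight to be distributed nearly evenly. A clean way to carry this out is by a pigeonhole / averaging argument together with a re-examination of the orthogonal-separator step in the rounding to verify that the mass concentrated around a center is, with constant probability, at least $w(V)/(2k)$; if necessary, this can be boosted by running the rounding $\bigO(\log n)$ times and retaining the balanced runs.
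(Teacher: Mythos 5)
Your high-level plan -- augment the SDP with a balance constraint and reuse the Cheeger-type rounding from \prettyref{thm:kpart-alg-2} -- matches the paper's route, but the specific constraint you add and, more importantly, the entire lower-bound argument are off. The paper (see the second half of the proof in \prettyref{sec:merging-sets}) does not add a one-sided cardinality constraint ``$w(P_i)\le(1+\e)w(V)/k$ at the vector level''; instead it rescales so $w(G)=k$ and adds the \emph{equality} constraint $\|\U\|^2=1$ for all $u$. This is valid for the intended balanced solution and, crucially, makes the SDP measure $\mu$ and the true weight $w$ coincide \emph{exactly} ($\mu(L_r)=w(L_r)$, $R=1$), which is what powers the rest of the argument. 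You also omit the essential modification of the threshold step: $r_i$ must be chosen to minimize $\phi_G(P_i)$ subject to the side constraint $\mu(P_i)\ge(1-\e/2)\mu(S''_i)$. Without this constraint, thresholding can discard almost all of the mass of $S''_i$ while still achieving low expansion, so there is no reason why $w(P_i)$ would be bounded below, no matter what you wrote in the SDP.

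Your proposed fix for the lower bound -- charging the boundary of small sets against a slack budget, then appealing to pigeonhole/averaging, possibly boosted by repeated runs -- does not close the gap. The constraints $\sum_i w(P_i)\ge(1-\e)k$ and $w(P_i)\le 1+\e$ do not force $\Omega(k)$ of the sets to have weight $\ge 1/2$: one could have close to $k$ heavy sets of weight $\approx 1$ accounting for nearly all the mass together with arbitrarily many tiny sets, or alternatively far fewer than $k$ heavy sets with the remaining mass spread across many small ones. No averaging argument by itself recovers a count of $(1-\e)k$ sets of weight $\ge 1/2$. The paper's actual mechanism is a \emph{greedy merge}: group the $P_i$ of weight $<1/2$ together, relying on the elementary monotonicity $\phi_G(A\cup B)\le\max\{\phi_G(A),\phi_G(B)\}$ for disjoint $A,B$ so merging never increases expansion, discard the at most one leftover small set, and then do a short numerical computation to show that the merged family has at least $\lceil(1-4\e)k\rceil$ members. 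This merging step is not optional and cannot be replaced by the charging/boosting heuristic you sketch; you should incorporate it (together with the $\|\U\|^2=1$ constraint and the constrained choice of $r_i$) to make your proof go through. Once those three ingredients are in place, the only change needed relative to \prettyref{thm:bkpart-alg-1} is to substitute the $\ell_2$-distortion orthogonal separators of \prettyref{app:kpart-alg-2}, which converts the bound from $\bigO_\e(\sqrt{\log n\log k}\,\OPT)$ to $\bigO_\e(\sqrt{\OPT\log k})$.
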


Note that the algorithms above return $k'$ disjoint sets that do not have to cover all vertices. The proofs of these 
theorems are similar to the proofs of our main results -- \prettyref{thm:kpart-alg-1} and \prettyref{thm:kpart-alg-2}. We refer the reader to 
\prettyref{sec:merging-sets} for more details. In fact, the assumption that all sets in the optimal solution have the same size makes the balanced problem much simpler. \prettyref{thm:bkpart-alg-1} also follows (possibly with slightly worse guarantees) from the result of 
Krauthgamer, Naor, and Schwartz~\cite{kns09}, who gave a bi-criteria $O(\sqrt{\log n \log k})$
approximation algorithm for the $k$-Balanced Partitioning Problem (with the ``min-sum'' objective).

\subsection{Organization}

We prove \prettyref{thm:kpart-alg-1} in \prettyref{sec:main-alg}. 
We present the SDP relaxation of \skp in \prettyref{sec:SDP} and the main rounding algorithm in 
\prettyref{sec:main-alg}.
We prove \prettyref{thm:kpart-alg-2} in \prettyref{app:kpart-alg-2}.
We prove \prettyref{prop:cheeger-partition} in \prettyref{app:cheeger-partition}.

%%The proofs of \prettyref{thm:bkpart-alg-1} and \prettyref{thm:bkpart-alg-2} 
%%are very similar to the proofs of \prettyref{thm:kpart-alg-1} and 
%%\prettyref{thm:kpart-alg-2} respectively, and hence are deferred to the full version of the paper.

%%%%%%%%%%%%%%%%%%%%%%%%%%%%%%%%%%%%%%%% MAIN ALGORITHM %%%%%%%%%%%%%%%%%%%%%%%%%%%%%%%%%%%%

\section{Main Algorithm}
\label{sec:main=alg}
We first prove a slightly weaker result. We give an algorithm that finds at least $(1-\e)k$ disjoint 
sets each with expansion at most $\bigO_{\e}\left(\sqrt{\log n \log k}\; OPT \right)$. Note that 
we do not require that these sets cover all vertices in $V$.

\begin{theorem}
\label{thm:k-weak-part-alg-1}
There exists a probabilistic polynomial-time algorithm that given an undirected graph $G$ and parameters 
$k\in \mathbb{N}$ $(k\geq 2)$, $\e>0$, outputs $k'\geq (1 - \e)k$ disjoint sets $P_1,\dots, P_{k'}$ such that 
$$\Ex{\max_{i} \phi (S_i)}\leq \bigO_{\e} \left(\sqrt{\log n \log k} \; OPT \right),$$
where $OPT$ is the cost of the optimal sparsest $k$-partitioning of $G$.
\end{theorem}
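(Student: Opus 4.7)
The plan is to prove the theorem by rounding the new SDP relaxation of \skp introduced in \prettyref{sec:SDP}. Solving that SDP yields, for each vertex $u\in V$, a vector $v_u$ living in some $\ell_2^2$ space together with a nonnegative ``measure'' $\mu(u)$ on vertices. The SDP is designed so that its objective value on the optimal integral solution is at most $\OPT$ (bounding an $\ell_2^2$-style sum of cut-edge contributions), while the measure $\mu$ certifies that in the optimal integral $k$-partition each part carries total $\mu$-mass approximately $1$, regardless of the actual size of the part. This is precisely the device that lets one avoid guessing set sizes.

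My first step in the rounding would be to apply an ARV-type $\ell_2^2 \to \ell_2$ embedding to the vectors $\{v_u\}$, invoking the structure theorem of Arora, Rao and Vazirani \cite{arv04}. This is the source of the $\sqrt{\log n}$ factor in the final bound: it distorts SDP quantities by at most a $\bigO(\sqrt{\log n})$ factor while turning squared distances into honest $\ell_2$ distances. The crucial point -- explicitly highlighted in the introduction -- is that past approaches (e.g.\ \cite{lot12,lrtv12}) could only work in $\ell_2^2$ and consequently incurred a $\sqrt{\OPT}$ loss, whereas working in $\ell_2$ via this embedding avoids the square root.

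The second step is iterative ball-carving. I would repeatedly (i) sample a ``center'' $u$ with probability proportional to the residual $\mu$-mass at vertices not yet extracted; (ii) choose a random radius $r$ uniformly from $[0,R]$ for some $R = \Theta_\e(1/\sqrt{\log k})$; and (iii) output the ball of radius $r$ around $u$ in the $\ell_2$-embedded space, intersected with the remaining vertices, as the next set $P_i$. A standard Cheeger-style averaging argument then gives that the expected expansion of $P_i$ is $\bigO_\e(\sqrt{\log n\,\log k})\cdot \OPT$. The process continues until either the reservoir of $\mu$-mass is close to exhausted or $(1-\e)k$ sets have been produced. Conceptually this is in the spirit of the orthogonal-separator approach of Bansal et al.\ \cite{bfk11}, adapted so that it is the measure $\mu$, rather than set size, that controls when to stop.

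The main obstacle will be arguing that the procedure indeed produces at least $(1-\e)k$ non-trivial disjoint sets. For this I would rely on the SDP measure guarantee: since each optimal part has total $\mu$-mass $\approx 1$, each successful round removes roughly a $1/k$ fraction of the total mass, so $(1-\e)k$ rounds can be completed before the mass runs out. The technical difficulty is to bound the probability that an individual round yields a set that must be discarded (because it is too small or too expanding), and to show that such ``wasted'' rounds account for at most an $\bigO(\e)$ fraction; this will be where the novel measure-based constraints of the SDP in \prettyref{sec:SDP} have to do the real work. Once these two estimates are in place, the bound on $\Ex{\max_i \phi_G(P_i)}$ follows by combining the $\sqrt{\log n}$ embedding factor, the $\sqrt{\log k}$ Cheeger-threshold factor, and a union/Markov argument over the $\le k$ rounds.
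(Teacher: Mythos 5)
Your outline has the right skeleton (the measure-endowed SDP, an $\ell_2$-type embedding to avoid the $\sqrt{\OPT}$ loss, iterative extraction of roughly $k$ low-expansion regions, and a final averaging step), and it is broadly in the spirit of what the paper does via orthogonal separators. But two of the load-bearing steps are missing or wrong as stated.

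First, your counting argument --- ``each successful round removes roughly a $1/k$ fraction of the total mass, so $(1-\e)k$ rounds can be completed'' --- does not work in that direction: a \emph{lower} bound of $1/k$ on the mass removed per round would let the mass run out \emph{early}, and the SDP gives you no lower bound on the measure of a ball around a random center anyway (a center $u$ with tiny $\norm{\U}^2$ may have almost no mass near it). What you actually need, and what the paper proves, is an \emph{upper} bound: using the spreading constraint $\sum_v w_v \inprod{\U,\V}=1$ one shows that each candidate region has measure at most $1+\e/2$ (regions violating this are discarded), the regions together cover $V$ so their measures sum to $\mu(V)=k$, and then a weighted Markov argument over the regions produces at least $(1-\e)k$ indices whose ``density'' $\nu(S''_i)/\mu(S''_i)$ is $O(Z/\e)$. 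Your sketch never explains how the ball-carving enforces the measure upper bound, which is exactly where the new SDP constraints must be used (in the paper this is \prettyref{lem:min-prob}, via Property 2 of orthogonal separators with threshold $\beta=1-\e/4$).

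Second, and more fundamentally, you never convert the measure $\mu(S)=\sum_{u\in S} w_u\norm{\U}^2$ into the actual weight $w(S)$ that appears in the denominator of $\phi_G$. A region of measure $\approx 1$ can have arbitrarily small or large weight, so ``a standard Cheeger-style averaging argument'' over the ball radius does not yield an expansion bound. The paper's fix is a \emph{second} sweep inside each region $S''_i$ over the level sets $L_r=\{u\in S''_i:\norm{\U}^2\geq r\}$: this gives $\E_r[w(L_r)]=\mu(S''_i)/R$ and $\E_r[|E(L_r,V\setminus L_r)|]=\nu(S''_i)/R$, so the ratio is exactly the controlled quantity $\nu(S''_i)/\mu(S''_i)$, and some threshold $r^*$ achieves expansion $O(Z/\e)$. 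Without this step (or a substitute for it), the argument does not close. Relatedly, note that the paper does not run a separate ARV embedding followed by ball-carving; it normalizes the vectors to unit length via the Chlamtac--Makarychev--Makarychev map $\psi$ and samples $(12k/\e)$-orthogonal separators, whose distortion $D=\bigO_{\e}(\sqrt{\log n\log k})$ already packages the ARV-type machinery together with the pairwise-separation property needed for the measure bound.
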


Then, in \prettyref{sec:partit}, we show how using $k'\geq (1-\e)k$ such sets, we can find a partitioning 
of $V$ into $k''\geq (1-2\e)k$ sets with each set having expansion at most 
$\bigO_{\e}\left(\sqrt{\log n \log k} \;OPT \right)$.

Our algorithm works in several phases. First, it solves the SDP relaxation, which we present in \prettyref{sec:SDP}.
Then it transforms all vectors to unit vectors and  defines a measure $\mu(\cdot)$ on vertices of the graph.
We give the details of this transformation in \prettyref{sec:norm}. Succeeding this, in the main phase, 
the algorithm samples many independent {\em orthogonal separators} $S_1,\dots, S_T$ and then extracts
$k' > (1 - \e)k$ disjoint subsets from them. We describe this phase in \prettyref{sec:main-alg}.
Finally, the algorithm merges some of these sets with the left over vertices to obtain a $k'' \geq (1 - \e)k'$ partition. 
We describe this phase in \prettyref{sec:merging-sets}.

\subsection{SDP Relaxation}
\label{sec:SDP}
We employ a novel SDP relaxation for the \skp problem. The main challenge in writing an SDP relaxation is that we do not know the sizes 
of the sets in advance, so we cannot write standard spreading constraints or spreading constraints used in the paper of 
Bansal et al. \cite{bfk11}. For each vertex $u$,
we introduce a vector $\U$. In the integral solution corresponding to the optimal partitioning $P_1, \dots, P_k$,
each vector $\U$ has $k$ coordinates, one for every set $P_i$:
\[ \U(i) = \begin{cases} \frac{1}{\sqrt{w(P_i)}} & \text{if } u\in P_i;\\0&\text{otherwise.}\end{cases} \]
Observe, that the integral solution satisfies two crucial properties: for each set $P_i$,
\begin{equation}\label{eq:sum-one}
\sum_{u\in P_i} w_u \|\bar  u\|^2 = \sum_{u\in P_i} \frac{w_u}{w(P_i)} = 1,
\end{equation}
and for every vertex $u\in P_i$,
\begin{equation}\label{eq:sparse-new}
\sum_{v\in V} w_v \inprod{ \U, \V } = \sum_{v\in P_i} \frac{w_v}{w(P_i)} + \sum_{v\notin P_i} 0 = 1.
\end{equation}
\prettyref{eq:sum-one} gives us a way to measure sets. Given a set of vectors $\{\U\}$, we define a measure
$\mu(\cdot)$ on vertices as follows
\begin{equation}\label{eq:mu}
\mu(S) = \sum_{u\in S} w_u \|\bar  u\|^2.
\end{equation}
For the intended solution, we have $\mu(P_i)=1$, and hence $\mu(V)=k$. This is the first constraint we add to the SDP:
$$\mu(V)\equiv \sum_{u\in V} w_u \|\bar  u\|^2 = k.$$
From~\prettyref{eq:sparse-new}, we get a spreading constraint:
$$\sum_{v\in V} w_v \inprod{ \U, \V } = 1.$$
We also add $\ell_2^2$ triangle inequalities to the SDP. It is easy to check that they are satisfied
in the intended solution (since they are satisfied for each coordinate).

Finally, we need to write the objective function that measures the expansion of the sets. In the intended solution, 
if $u,v\in P_i$ (for some $i$), then $\U = \V$, and $\|\U - \V\|^2=0$. If 
$u\in P_i$ and $v\in P_j$ (for $i\neq j$), then 
\[ \|\U - \V\|^2 = \|\U\|^2 + \|\V\|^2 = 1/w(P_i) + 1/w(P_j) \mper \] 
Hence,
\[
 \frac{1}{k}\sum_{(u, v)\in E} \|\U - \V\|^2 = \frac{1}{k}\sum_{i < j} \sum_{\substack{(u, v)\in E\\u\in P_i\\v\in P_j}} 
\Big(\frac{1}{w(P_i)} + \frac{1}{w(P_j)}\Big) 
 = \frac{1}{k}\sum_{i}  \frac{|E(P_i, V\setminus P_i)|}{w(P_i)} = \frac{1}{k}\sum_{i} \phi_G(P_i)\leq OPT \mper \]

We get the following SDP relaxation for the problem.
\vskip 1em

\begin{figure}[ht]
\begin{tabularx}{\columnwidth}{|X|}
\hline
\renewcommand{\arraystretch}{1.5}
$$\min \frac{1}{k}\sum_{(u, v)\in E} \|\U - \V\|^2$$
$$\begin{array}{rcll}
\displaystyle{\sum_{u \in V} w_u \|\U\|^2} & = & \displaystyle{k} &\\ 
\displaystyle{\sum_{v \in V} w_v \inprod{ \U, \V}} &= &\displaystyle{1} &\quad \forall u \in V\\
\displaystyle{\|\U - \bar x\|^2 + \|\bar x - \V\|^2} &\geq& \displaystyle{\|\U - \V\|^2}&\quad \forall u,v,x \in V\\
\displaystyle{0 \quad\leq \quad\inprod{\U,\V}} & \leq & \displaystyle{\|\U\|^2} &\quad \forall u,v \in V
\end{array}$$
\\
\hline 
\end{tabularx}
\caption{SDP Relaxation for Sparsest $k$-Partition}
\label{sdp:kpart-sdp}
\end{figure}

\subsection{Normalization}\label{sec:norm}
After the algorithm solves the \prettyref{sdp:kpart-sdp}, we define the measure $\mu$ using \prettyref{eq:mu}, and ``normalize'' all vectors using 
a transformation $\psi$ from the paper of Chlamtac, Makarychev and Makarychev~\cite{cmm06b}. The transformation $\psi$ defines the 
inner products between $\psi(\U)$ and $\psi(\V)$ as follows 
(all vectors $\U$ are nonzero in our SDP relaxation):
$$\inprod{\psi(\U), \psi(\V)}  = \frac{\inprod{ \U,\V }}{ \max \{ \norm{\U}^2 , \norm{\V}^2 \}}.$$
This uniquely defines vectors $\psi(\U)$ (up to an isometry of $\ell_2$). 
Chlamtac, Makarychev and Makarychev showed that the image $\psi(X)$ of any $\ell_2^2$ space $X$ is an $\ell_2^2$ space, and the following condtions hold.
\begin{itemize}
\item For all non-zero vectors $\U \in X$, $\norm{\psi(\U)}^2 = 1 $.
\item For all non-zero vectors $u,v \in X$,
$$\norm{ \psi(\U)  - \psi(\V)   }^2 \leq  \frac{2 \norm{\U - \V}^2 }{ \max \set{ \norm{\U}^2, \norm{\V}^2 }  }.$$
\end{itemize}

\subsection{Orthogonal Separators}

Our algorithm uses the notion of orthogonal separators introduced by Chlamtac, Makarychev, and Makarychev~\cite{cmm06b}. 
%\begin{definition}
Let $X$ be an $\ell_2^2$ space. We say that a distribution over subsets of $X$ is a $k$-orthogonal
separator of $X$ with distortion $D$, probability scale $\alpha >0$ and separation threshold $\beta<1$, if the following conditions hold for $S \subset X$
chosen according to this distribution:
\begin{enumerate}
\item For all $\U \in X$, $\Pr{ \U \in S} = \alpha \norm{\U}^2$.

\item For all $\U, \V \in X$ with $\inprod{\U,\V} \leq \beta \max \set{ \norm{\U}^2, \norm{\V}^2 }$,

$$\Pr{\U \in S \textrm{ and } \V \in S} \leq \frac{ \alpha \min \set{\norm{\U}^2, \norm{\V}^2   }   }{k} .$$

\item For all $u,v \in X$
$$\Pr{ I_S(\U) \neq I_S(\V) } \leq \alpha D \norm{\U - \V}^2. $$
Here $I_S$ is the indicator function%%
\footnote{I.e., $I_S(\U) \defeq \begin{cases}  1 & \textrm{ if } \U \in S \\ 0; & \textrm{ otherwise.}    \end{cases}  $} of the set $S$. 
\end{enumerate}
%\end{definition}

\begin{theorem}[\cite{cmm06b,bfk11}]
\label{thm:orthsep}
There exists a polynomial-time randomized algorithm that given a set
of vectors $X$, a parameter $k$, and $\beta < 1$ generates a $k$-orthogonal separator with distortion
$D = \bigO_{\beta} \left( \sqrt{\log \Abs{X} \log k}   \right)$
and scale $\alpha \geq 1/p(|X|)$ for some polynomial $p$.
\end{theorem}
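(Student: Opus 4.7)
The plan is to construct the orthogonal separator via a Gaussian projection rounding in the style of Arora--Rao--Vazirani, augmented with a threshold of order $\sqrt{\log k}$ that enforces the $k$-separation property.

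\textbf{Construction.} After rescaling so that every $\U \in X$ has $\norm{\U} \leq 1$, draw a standard Gaussian $g$ in the ambient space, and independently a uniform $r \in [0,1]$. Pick a threshold $\sigma = \Theta_{\beta}(\sqrt{\log k})$ so that the Gaussian upper-tail $\alpha \defeq \Pr{N(0,1) \geq \sigma}$ equals $\Theta(1/k)$. Define
$$ S = \set{\U \in X \colon \norm{\U}^2 \geq r \text{ and } \inprod{g,\U/\norm{\U}} \geq \sigma}. $$
The two filters are independent; the first is designed to give the $\norm{\U}^2$-scaling, the second to give the near-orthogonal separation.

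\textbf{Properties 1 and 2.} Property 1 is immediate by independence: $\Pr{\U \in S} = \alpha\norm{\U}^2$, and $\alpha$ is polynomially bounded below by our choice of $\sigma$. For Property 2, if $\inprod{\U,\V} \leq \beta \max\set{\norm{\U}^2, \norm{\V}^2}$, then the projections onto $\U/\norm{\U}$ and $\V/\norm{\V}$ are jointly Gaussian with correlation at most $\beta < 1$. A two-dimensional Gaussian upper-tail bound then gives joint probability $\bigO_{\beta}(\alpha^2)$, which is at most $\alpha/k$ for our choice of $\sigma$; combining with the independent radius filter yields $\Pr{\U \in S \text{ and } \V \in S} \leq \alpha \min\set{\norm{\U}^2, \norm{\V}^2}/k$, as required.

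\textbf{Main obstacle: Property 3.} A direct single-Gaussian analysis of the threshold-crossing event $\set{I_S(\U) \neq I_S(\V)}$ yields a separation probability of order $\alpha \sqrt{\log k}\,\norm{\U - \V}$, which is linear in the distance rather than in the squared distance required. Bridging this gap is the technical heart of the construction and is where the $\ell_2^2$ structure of $X$ enters: following the ARV framework, one argues that for a typical Gaussian direction, the pairs whose normalized projections differ by more than $\Omega(1/\sqrt{\log \abs{X}})$ can be organized into a matching or chain whose total squared length is small, and the $\ell_2^2$ triangle inequalities on $X$ upgrade the linear bound to a quadratic one at the price of an extra $\sqrt{\log \abs{X}}$ factor. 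Combined with the $\sqrt{\log k}$ threshold scale, this yields the claimed distortion $D = \bigO_{\beta}(\sqrt{\log \abs{X} \log k})$; the ARV matching/chaining step is the main difficulty.
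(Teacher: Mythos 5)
This theorem is cited by the paper from~\cite{cmm06b,bfk11} and is not proved within it, so there is no in-house proof to compare against; the closest thing is Appendix~\ref{sec:ort-sep-prime}, which proves the weaker \emph{$\ell_2$-distortion} variant (\prettyref{thm:ort-sep-prime}) using exactly the single-Gaussian-threshold construction you propose. That construction indeed yields Properties~1 and~2 (with the caveats below) and an $\ell_2$ distortion of $\bigO_\beta(\sqrt{\log k})$, i.e.\ $\Pr{I_S(\U)\neq I_S(\V)}\leq \alpha D\norm{\U-\V}$ linear in distance. Upgrading the linear bound to the \emph{quadratic} bound $\alpha D\norm{\U-\V}^2$ with $D = \bigO_\beta(\sqrt{\log|X|\log k})$ is precisely what distinguishes \prettyref{thm:orthsep} from \prettyref{thm:ort-sep-prime}, and that step — an ARV structure-theorem argument applied to an $\ell_2^2$ space at the extreme threshold $\sigma\sim\sqrt{\log k}$ rather than at a median threshold — is the entire technical content of the result. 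You explicitly defer it to ``an ARV matching/chaining step,'' but this is a description of the desired conclusion rather than an argument: no bound on threshold-straddling pairs is established, and it is not clear that the single-threshold construction supports this upgrade at all (the constructions in~\cite{cmm06b,bfk11} build random words from $\Theta(\log k)$ thresholded projections and run the ARV machinery across all of them simultaneously to get the $\sqrt{\log|X|\log k}$ product).

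Two concrete bugs in the parts you do carry out. First, you project $\U/\norm{\U}$, but the separation hypothesis $\inprod{\U,\V}\leq\beta\max\{\norm{\U}^2,\norm{\V}^2\}$ does not bound the cosine of the angle between $\U/\norm{\U}$ and $\V/\norm{\V}$ when the norms differ: with $\norm{\U}=1$, $\norm{\V}=0.1$, $\inprod{\U,\V}=0.05$ the hypothesis holds for any $\beta\geq 0.05$ while the cosine is $0.5$. You must project the $\psi$-normalized vectors — for which $\inprod{\psi(\U),\psi(\V)} = \inprod{\U,\V}/\max\{\norm{\U}^2,\norm{\V}^2\}$ so the hypothesis \emph{does} become a correlation bound — which is exactly why CMM introduce $\psi$ in the first place; note that $\psi(\U)\neq\U/\norm{\U}$ in general. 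Second, $\alpha=\Theta(1/k)$ cannot satisfy Property~2: for positively correlated unit Gaussians with correlation $\rho\leq\beta$, the joint upper tail is $\Theta_\beta\bigl(\alpha^{2/(1+\beta)}\bigr)$, which exceeds $\alpha^2=\alpha/k$ whenever $\beta>0$. The correct calibration, as in Appendix~\ref{sec:ort-sep-prime}, is $\alpha\approx k^{-(1+\beta)/(1-\beta)}$ — still polynomial in $|X|$ since $k\leq|X|$, but with a $\beta$-dependent exponent — so your claim ``$\bigO_\beta(\alpha^2)\leq\alpha/k$'' does not hold as stated.

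A smaller point worth noting: your radius filter contributes a term $|\norm{\U}^2-\norm{\V}^2|\cdot\alpha$ to the Property~3 separation probability, and bounding this by $\alpha\norm{\U-\V}^2$ requires the $\ell_2^2$ triangle inequality against the origin, $\inprod{\U,\V}\leq\min\{\norm{\U}^2,\norm{\V}^2\}$. This is part of the SDP here, but it is worth making explicit since the theorem is stated for an abstract $\ell_2^2$ space.
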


In the algorithm, we sample orthogonal separators from the set of normalized vectors 
$\set{\psi(\U): u\in V}$. For simplicity of exposition we assume that 
an orthogonal separator $S$ contains not vectors $\U$, but
the corresponding vertices. That is, for an orthogonal separator $\tilde{S}$, 
we consider the set of vertices $S = \{u\in V: \psi(\U) \in \tilde{S}\}$.

\subsection{Algorithm}
\label{sec:main-alg}

We give an algorithm for generating $k'\geq (1-\e)k$ disjoint sets $P_i$ in Figure~\ref{alg:main-alg}.

\begin{figure*}[t]
\centering

\begin{mybox}
\begin{algorithm}~
\begin{enumerate}

\item Solve \prettyref{sdp:kpart-sdp} and obtain vectors $\{\U\}$.

\item Compute normalized vectors $\psi(\U)$, and define the measure $\mu(\cdot)$ (see \prettyref{sec:norm} and~Eq.~\prettyref{eq:mu}).

\item Sample $T = 2n/\alpha$ independent $(12k/\e)$-orthogonal separators $S_1, \ldots, S_{T}$ 
for vectors $\psi(\U)$ ($u\in V$) \\ with
separation threshold $\beta = 1 - \e/4$.

\item For each $i$, define $S_i'$ as follows:
$$ S_i' = \begin{cases} S_i & \textrm{if }  \mu(S_i) \leq 1 + \e/2; \\ \varnothing & \textrm{otherwise}.  \end{cases} $$

\item For each $i$, let $S_i'' = S'_i \setminus \left( \cup_t^{i-1} S_t'  \right)$ be the set of yet uncovered vertices in $S'_i$.

\item For each $i$, set $P_i =\{u\in S''_i: \|\U\|^2\geq r_i\}$, where the parameter $r_i$ is chosen to minimize
the \\ expansion $\phi_G(P_i)$ of the set $P_i$.

\item Output $(1-\e)k$ non-empty sets $P_i$ with the smallest expansion $\phi_G(P_i)$.
\end{enumerate}
\label{alg:ksets}
\end{algorithm}
\end{mybox}
\caption{Algorithm for generating $k'\geq (1-\e)k$ disjoint sets $P_i$.}
\label{alg:main-alg}
\end{figure*}

\subsection{\texorpdfstring{Properties of Sets $S''_i$}{Properties of Sets S}}
We prove that (a) the edge boundaries of the sets $S_i''$ are {\em small}; and 
(b) the sets $S''_i$ form a partition of $V$ w.h.p. The following lemma makes these statements precise.

\begin{lemma}\label{lem:SP-prop} For a set $S\subset V$, define 
\begin{equation}\label{def:nu}
\nu (S) = \sum_{\substack{(u,v)\in E(S,V\setminus S) \\ u \in S, v \notin S}} \|\U\|^2 + 
	\sum_{ \substack{  (u,v)\in E \\ u,v \in S}} |\|\U\|^2 - \|\V\|^2|.
\end{equation}
Then, sets $S_i''$ satisfy the following conditions:
\begin {enumerate}[(a)]
\item \[ \Ex{\sum_i \nu(S''_i)} \leq (8D +1)k \cdot SDP,\]
 where $D=\bigO_{\e}(\sqrt{\log n \log k})$ is the distortion of $(12k/\e)$-orthogonal separator, and $SDP$ is the value
of the SDP solution.
\item All sets $ S''_i$ are disjoint; and 
	\[ \Pr{\mu(\cup S''_i) = k}\geq 1-ne^{-n} \mper \]
\end{enumerate}
\end{lemma}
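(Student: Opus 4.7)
The proof splits into parts (b) and (a).

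\emph{Part (b).} Disjointness is immediate from the construction $S_i'' = S_i' \setminus \bigcup_{t<i} S_t'$. For coverage, the plan is to show $p_u := \Pr{u\in S_1'} \geq \Omega(\alpha \e)$ for every vertex $u$; then by independence of the $T = 2n/\alpha$ samples, $\Pr{u\notin\bigcup_i S_i''} \leq (1-p_u)^T \leq e^{-\Omega(n\e)}$, and a union bound over the $n$ vertices finishes the job. The lower bound on $p_u$ follows from $\E\brac{\mu(S_1)\mid u\in S_1} \leq 1 + O(\e)$ via conditional Markov (noting $\Pr{u\in S_1} = \alpha\|\psi(\U)\|^2 = \alpha$ by OS property (1) and the normalization). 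To prove this conditional expectation bound, I will write $\E\brac{\mu(S_1)\cdot \mathbf{1}[u\in S_1]} = \sum_v w_v\|\V\|^2 \Pr{u,v\in S_1}$ and split the sum by whether $\inprod{\psi(\U),\psi(\V)} > \beta = 1-\e/4$ (``close'') or $\leq \beta$ (``far''). For far $v$'s, OS property (2) with $k' = 12k/\e$ gives $\Pr{u,v\in S_1} \leq \alpha\e/(12k)$; combined with the first SDP constraint $\sum_v w_v\|\V\|^2 = k$ this contributes $\leq \alpha\e/12$. For close $v$'s, the normalization together with the SDP constraint $\inprod{\U,\V}\leq\|\V\|^2$ yields $\|\V\|^2 \leq \inprod{\U,\V}/(1-\e/4)$, and the spreading constraint $\sum_v w_v\inprod{\U,\V} = 1$ then bounds this contribution by $\alpha/(1-\e/4)$. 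Summing the two contributions gives $1 + O(\e)$, as required.

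\emph{Part (a).} For each edge $(u,v)\in E$, let $W(u,v)$ denote its contribution to $\sum_i \nu(S_i'')$, and let $i_u$ be the unique index $i$ with $u\in S_i''$ (or $\infty$ if $u$ is uncovered). A case check on the definition of $\nu$ gives
\[ W(u,v) = \|\U\|^2\mathbf{1}[i_u<\infty,\, i_u\neq i_v] + \|\V\|^2\mathbf{1}[i_v<\infty,\, i_u\neq i_v] + \big|\|\U\|^2-\|\V\|^2\big|\,\mathbf{1}[i_u=i_v<\infty]. \]
The interior (third) term is handled at once by the SDP constraint $\inprod{\U,\V}\leq \min(\|\U\|^2,\|\V\|^2)$: WLOG $\|\U\|\geq\|\V\|$, and then $\|\U-\V\|^2 = \|\U\|^2+\|\V\|^2 - 2\inprod{\U,\V} \geq \|\U\|^2 - \|\V\|^2$, so deterministically $|\|\U\|^2-\|\V\|^2|\leq \|\U-\V\|^2$. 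Summed over edges, the interior therefore contributes at most $\sum_E \|\U-\V\|^2 = k\cdot SDP$, accounting for the ``$+1$'' coefficient. For the boundary, I will run a ``race'' argument. By independence of samples, $\Pr{i_u = j,\,i_v > j} = s\cdot n_0^{j-1}$, where $s = \Pr{u\in S_1',v\notin S_1'}$ and $n_0 = \Pr{u,v\notin S_1'}$; summing the geometric series yields $\Pr{i_u<i_v} \leq s/p_u$. From part (b), $p_u = \Omega(\alpha\e)$, and OS property (3) bounds $s \leq \Pr{u\in S_1, v\notin S_1} \leq \alpha D\|\psi(\U)-\psi(\V)\|^2$. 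Thus $\Pr{i_u<i_v} \leq \bigO_{\e}(D)\|\psi(\U)-\psi(\V)\|^2$. Multiplying by $\|\U\|^2\leq \max(\|\U\|^2,\|\V\|^2)$ and applying the normalization inequality $\|\psi(\U)-\psi(\V)\|^2 \leq 2\|\U-\V\|^2/\max(\|\U\|^2,\|\V\|^2)$ gives $\|\U\|^2\Pr{i_u<i_v}\leq \bigO_{\e}(D)\|\U-\V\|^2$. The symmetric term $\|\U\|^2\Pr{i_v<i_u<\infty}$ is bounded identically via the race argument with $u, v$ swapped, and the two $\|\V\|^2$ terms are completely symmetric. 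Summing all four contributions per edge and then over all edges yields $\E\brac{\text{boundary}} \leq \bigO_{\e}(D)\, k\cdot SDP$, which combined with the interior bound gives the claimed $(8D+1)k\cdot SDP$ estimate (with the $\e$-dependence absorbed into the $8D$ coefficient, as elsewhere in the paper).

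\emph{Hardest step.} The race bound $\Pr{i_u<i_v} \leq s/p_u$ is the crux. It is conceptually simple (the probability that an i.i.d.\ sequence ``hits $u$ before $v$'') but requires carefully using the fact that the $S_i$'s are sampled independently together with the fact that $i_u$ is the first ``successful'' index. The normalization map $\psi$ is essential throughout: it lets the OS property (3) bound on $\|\psi(\U)-\psi(\V)\|^2$ combine with $\|\U\|^2 \leq \max(\|\U\|^2,\|\V\|^2)$ to deterministically recover $\|\U-\V\|^2$ up to a factor of $2$, avoiding a loss of an unbounded $1/\|\U\|^2$ factor. The interior term is pleasantly handled by the purely algebraic inequality $|\|\U\|^2-\|\V\|^2|\leq \|\U-\V\|^2$, which needs no probabilistic analysis at all.
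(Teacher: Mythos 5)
Your proposal is correct in substance and closely tracks the paper's structure, but there is one genuine place where you take a blunter route and pay for it with a $1/\e$ factor that the paper avoids.

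For the lower bound on $p_u = \Pr{u\in S_1'}$, you bound $\E\bigl[\mu(S_1)\mid u\in S_1\bigr]\leq 1+O(\e)$ (via the same close/far split and the same three SDP facts the paper uses) and then apply Markov's inequality to the random variable $\mu(S_1)$ itself. Since the threshold $1+\e/2$ is only an $O(\e)$ margin above your expectation bound, Markov yields only $\Pr{\mu(S_1)\leq 1+\e/2\mid u\in S_1} = \Omega(\e)$, hence $p_u = \Omega(\alpha\e)$. The paper's Lemma~\ref{lem:min-prob} is sharper: it observes that the ``close'' contribution $\mu(S_1\cap A_u)\leq\mu(A_u)\leq 1/\beta\leq 1+\e/3$ is a \emph{deterministic} bound (independent of $S_1$), and applies Markov only to the ``far'' random variable $\mu(S_1\cap B_u)$, whose conditional expectation is the tiny quantity $\e/12$. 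Markov at threshold $\e/6$ then gives probability $\leq 1/2$, not $1-\Omega(\e)$, hence $p_u\geq\alpha/2$. This is exactly the move you missed. The consequence propagates: your geometric-series bound in the race argument becomes $\sum_j n_0^{j-1}\leq 1/p_u = O(1/(\alpha\e))$, producing a boundary bound of $O(D/\e)\,k\cdot SDP$ rather than the stated $8D\,k\cdot SDP$. This does not invalidate the downstream theorems (the $1/\e$ is absorbed into $\bigO_\e$), but it does not prove the lemma with the constant as stated.

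Everything else is essentially the paper's proof in different notation. Your ``race'' argument $\Pr{i_u<i_v}\leq s/p_u$ via $\sum_j n_0^{j-1}s$ is the same computation the paper does by summing $\Pr{u\notin U_{i-1}}\cdot\Pr{I_{S_i}(u)\neq I_{S_i}(v)}$ over $i$; your case check on $W(u,v)$ is the paper's observation about what cut/uncut edges contribute to $\sum_i\nu(S_i'')$; the interior bound $|\|\U\|^2-\|\V\|^2|\leq\|\U-\V\|^2$ from $\inprod{\U,\V}\leq\min\{\|\U\|^2,\|\V\|^2\}$ is identical; and the use of the normalization inequality $\|\psi(\U)-\psi(\V)\|^2\leq 2\|\U-\V\|^2/\max\{\|\U\|^2,\|\V\|^2\}$ to cancel the $\max$ against the $\|\U\|^2+\|\V\|^2$ factor is exactly how the paper recovers $\|\U-\V\|^2$ with only a factor-of-$2$ loss. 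In short: same skeleton, one weaker joint in Lemma~\ref{lem:min-prob}.
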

\begin{proof}
(a) Let $E_{cut}$ be the set of edges cut by the partitioning $S_1'',\dots, S_T'', V\setminus (\cup S_i'')$. Observe, that each
cut edge $(u,v)$ contributes $\norm{\U}^2 + \norm{\V}^2$ to the sum $\sum \nu(S_i'')$, and each uncut 
edge contributes either $|\|\U\|^2 - \|\V\|^2|$, or $0$. Hence,
\[  \Ex{\sum_i \nu(S_i'')}  \leq  \Ex{\sum_{(u,v)\in E_{cut}} (\|\U\|^2 + \|\V\|^2)}  
  + \sum_{(u,v)\in E} |\|\U\|^2 - \|\V\|^2|  \]
The second term is bounded by 
$$\sum_{(u,v)\in E}\|\U - \V\|^2 = k\cdot SDP,$$ since 
$$\|\U\|^2 - \|\V\|^2= \|\U - \V\|^2 - 2(\|\V\|^2 - \inprod{ \U, \V}) 
\leq \|\U - \V\|^2.$$
The inequality follows from the SDP constraint $\|\V\|^2 \geq \inprod{ \U, \V}$.
We now bound the first term. To do so we need the following lemma.
\begin{lemma}\label{lem:min-prob}
For every vertex $u\in V$ and $i\in\{1,\dots, T\}$, we have
$\Pr{u\in S_i'}\geq \alpha /2$.
\end{lemma}

We give the proof of \prettyref{lem:min-prob} after we finish the proof of \prettyref{lem:SP-prop}. 
Let us estimate the probability that an edge $(u,v)$ is cut. Let $U_t = \cup_{i\leq t}S'_i$ be the set of vertices covered by the first $t$ sets $S'_i$. 
Note, that $S''_i = S'_i\setminus U_{i-1}$.
We say that the edge $(u,v)$ is cut by the set $S'_t$, if $S'_t$ is the first set containing $u$ or $v$, and it contains only one of these vertices.
Then,
\begin{eqnarray*}
 \Pr {(u,v)\in E_{cut}} &= & \sum_{i=1}^T \Pr {(u,v)\text{ is cut by }S'_i\,} \\ 
 &= & \sum_{i=1}^T \Pr {u,v \notin U_{i-1} \text{ and } I_{S'_i}(u) \neq I_{S'_i}(v)}  \\
& \leq & \sum_{i=1}^T \Pr {u \notin U_{i-1} \text{ and } I_{S_i}(u) \neq I_{S_i}(v)} \\
 & = & \sum_{i=1}^T \Pr {u \notin U_{i-1}} \Pr{I_{S_i}(u) \neq I_{S_i}(v)}. 
\end{eqnarray*}
Now, by \prettyref{lem:min-prob}, $\Pr {u \notin U_{i-1}}\leq (1-\alpha/2)^{i-1}$, and, by Property 3 of orthogonal separators,
\begin{eqnarray*}
\Pr{I_{S_i}(u) \neq I_{S_i}(v)} &\leq & \alpha D\|\psi(u)-\psi(v)\|^2 \\
  &\leq & \frac{2\alpha D\|\U-\V\|^2}{\max\{\|\U\|^2, \|\V\|^2\}}.
\end{eqnarray*}
Thus  (using $\sum_i(1-\alpha/2)^i \leq 2/\alpha$),
$$\Pr {(u,v)\in E_{cut}} \leq \frac{4D\,\|\U-\V\|^2}{\max\{\|\U\|^2, \|\V\|^2\}}.$$
We are almost done,
\begin{eqnarray*}
 \Ex{\sum_{(u,v)\in E_{cut}} (\|\U\|^2 + \|\V\|^2)}
& = & \sum_{(u,v)\in E} \Pr {(u,v)\in E_{cut}} (\|\U\|^2 + \|\V\|^2) \\
& \leq & \sum_{(u,v)\in E} \frac{4D\,\|\U-\V\|^2}{\max\{\|\U\|^2, \|\V\|^2\}}\cdot 
(\|\U\|^2 + \|\V\|^2)  \\
&\leq & \sum_{(u,v)\in E} 8D\,\|\U-\V\|^2 = 8kD \cdot SDP.  
\end{eqnarray*}
Thus we get that 
\[ \Ex{\sum_i \nu(S''_i)}\leq (8D + 1)k\cdot SDP \mper \] 

\medskip

(b) The sets $S_i''$ are disjoint by definition. By \prettyref{lem:min-prob}, the probability that a vertex is not covered by any set $S_i$ is 
$(1-\alpha/2)^T = (1-\alpha/2)^{2n/\alpha} < e^{-n}$. So with probability at least $1- n e^{-n}$ all vertices are covered.
\end{proof}

It remains to prove \prettyref{lem:min-prob}.
\begin{proof}[Proof of \prettyref{lem:min-prob}] We adopt a slightly modified argument from the paper of 
Bansal et al. \cite{bfk11} (Theorem~2.1, arXiv). If $u\in S_i$, then $u\in S_i'$ unless $\mu(S_i) > 1 + \e/2$, hence
\begin{align*}
\Pr{u\in S'_i\,} & =  \Pr{u\in S_i} (1 - \Pr{\mu(S_i)> 1+ \e/2 \Given u\in S_i}) \\
 & =  \alpha (1 - \Pr{\mu(S_i)> 1+{\e/2} \Given u\in S_i}).
\end{align*}
Here, we used that $\Pr{u\in S_i} = \alpha \|\psi(\U)\|^2 = \alpha$ (see Property 1 of orthogonal separators).
We need to show that $\Pr{\mu(S_i)> 1+\e/2 \Given u\in S_i}\leq 1/2$.
Let us define the sets $A_u$ and $B_u$ as follows. 
$$ A_u = \set{ v \in V: \inprod{ \psi(\U), \psi(\V)} \geq \beta  } $$
and 
$$ B_u = \set{v \in V: \inprod{ \psi(\U), \psi(\V)}  < \beta }.  $$
Now, 
\begin{align*}
 \mu(A_u)  &=   \sum_{v \in A_u} w_v  \norm{\V}^2  \leq   \frac{1}{\beta} \sum_{v \in V} w_v \norm{\V}^2 \inprod{ \psi(\U), \psi(\V)}
	 =   \frac{1}{\beta} \sum_{v \in V} w_v \norm{\V}^2 \frac{ \inprod{\U,\V}}{ \max \set{ \norm{\V}^2 , \norm{\V}^2 }  } \\
	& \leq   \frac{1}{\beta} \sum_{v \in V} w_v \inprod{\U,\V} \stackrel{\diamond}{=}  \frac{1}{\beta}\leq 1 + \frac{\e}{3}.
\end{align*}
Equality ``$\diamond$'' follows from the SDP constraint $\sum_{v \in V} w_v \inprod{\U,\V} = 1$.
For any $v \in B_u$, we have $\inprod{\psi(\U), \psi(\V)} < \beta$. Hence,
by Property 2 of
orthogonal separators,
$$ \Pr{v \in S_i\Given u \in S_i} \leq \frac{\e}{12k} $$
Therefore,
$$\Ex{ \mu(S_i \cap B_u) \Given u \in S_i } \leq \frac{\e \mu(B_u)}{12k} \leq \frac{\e \mu(V)}{12k} = \frac{\e}{12} .$$
By Markov's inequality,
$\Pr{\mu(S_i \cap B_u) \geq \e /6 \Given u \in S_i } \leq  1/2$.
Since $\mu(S_i)= \mu(S_i\cap A_u) + \mu(S_i\cap B_u)$, we get 
$\Pr{\mu(S_i) \geq 1 + \e/2 \Given u\in S_i}\leq 1/2$.
\end{proof}

\subsection{End of Proof}
\label{sec:merging-sets}
We are ready to finish the analysis of \prettyref{alg:ksets} and prove \prettyref{thm:k-weak-part-alg-1} and \prettyref{thm:bkpart-alg-1}.

\begin{proof}[Proofs of \prettyref{thm:k-weak-part-alg-1} and \prettyref{thm:bkpart-alg-1}]
We first prove \prettyref{thm:k-weak-part-alg-1}, then we slightly modify \prettyref{alg:ksets} and prove 
\prettyref{thm:bkpart-alg-1}.

I. We show that \prettyref{alg:ksets} outputs sets satisfying conditions of \prettyref{thm:k-weak-part-alg-1}.
The sets $S''_i$ are disjoint (see \prettyref{lem:SP-prop}), thus sets $P_i$ are also disjoint. We now need to prove
that among sets $P_i$ obtained at Step~6 of the algorithm, there are at least $(1-\e)k$ 
sets with expansion less than $O_{\e}(\sqrt{\log n \log k}\; OPT)$ (in expectation).

Let $Z=\frac{1}{k}\sum_i \nu(S''_i)$. By \prettyref{lem:SP-prop} we have, 
\[ \Ex{Z}  \leq (8D+1)\, OPT \] 
and $S''_i$ form a partition\footnote{With an exponentially small probability the sets $S''_i$
do not cover all the vertices. In this unlikely event, the algorithm may output an arbitrary partition.}
of $V$. We through away all empty sets $S''_i$, and set $\lambda_i = \mu(S_i'')/k$.
Then $\sum_i \lambda_i = 1$, and
$$Z=\frac{1}{k}\sum_i \nu(S''_i)= \sum_i \lambda_i \cdot \frac{\nu(S''_i)}{\mu(S''_i)}.$$
Define $\cI=\{i: \nu(S''_i)/\mu(S''_i) \leq 3Z/\e\}$. By Markov's inequality (we can think of $\lambda_i$ as the weight of $i$),
\begin{equation}\label{eq:sum-lambda}
\sum_{i\in \cI}\lambda_i \geq 1 -\e/2 \mper 
\end{equation}
Since each $\lambda_i$ satisfies 
\[ \lambda_i = \mu(S''_i)/k \leq (1+\e/2)/k \]
the set $\cI$ has at least $(1-\e/2)k/(1+\e/2)\geq (1 -\e)k$ elements.

Fix an $i\in \cI$. Since $i\in \cI$, we have 
\[ \nu(S''_i)\leq 3Z/\e\cdot \mu(S''_i) \mper \] 
Let $R=\max\{\|\U\|^2: u\in S''_i\}$. 
For a random $r\in (0,R)$ and $L_r = \{u\in S''_i:\|\U\|^2\geq r\}$, we have 
\begin{equation}\label{eq:exp-weight}
 \E_r [w(L_r)]  = \mu(S''_i)/R 
\end{equation}
as each $u$ belongs to $L_r$ with probability $\|\U\|^2/R$ and 
\[ \E_r[|E(L_r, V \setminus L_r)|] = \nu(S''_i)/R \] 
(since an edge in $S''_i\times S''_i$ is cut with probability $|\|\U\|^2 - \|\V\|^2|/R$;
and an edge $(u,v)$ with $u\in S''_i$ and $v\notin S''_i$ is cut with probability $\|\U\|^2$ --- if and only if $u\in L_r$; 
compare with \prettyref{def:nu}). Therefore,
\begin{equation}\label{eq:exp-cut}
\nonumber \E_r[|E(L_r, V \setminus L_r)|]  =  \frac{\nu(S''_i)}{R} \leq \frac{3Z}{\e}\cdot
\frac{\mu(S''_i)}{R} 
  =   \frac{3Z}{\e}\cdot \E_r [w(L_r)] \mper 
\end{equation}
For some $r^*$, we get 
\[ \Abs{ E(L_{r^*}, V \setminus L_{r^*})} \leq  3Z/\e\cdot w(L_{r^*}) \mper \] 
By definition, $\phi_G(P_i) = \min_r \phi_G(L_r)$, thus
\[ \phi_G(P_i)=\frac{\Abs{ E(P_i,V\setminus P_i)}}{w(P_i)}\leq \frac{3Z}{\e} \mper \]

We showed that there are at least $|\cI|\geq (1-\e)k$ sets $P_i$ with expansion at most $3Z/\e$. Therefore, the expansion of the sets 
returned by the algorithm is at most $3Z/\e$. This finishes the proof, since $\Ex{3Z/\e}  = \bigO_{\e} (\sqrt{\log n \log k})\; OPT$.
\\[0.3em]

II. To prove \prettyref{thm:bkpart-alg-1}, we need to modify the algorithm. For simplicity, we 
rescale all weights $w_u$ and assume that $w(G)=k$. Then our goal is to find $k'$ disjoint sets $P_i$ of
weight in the range $[1/2,1+\varepsilon]$ each. Since all sets in the optimal 
solution to the $k$-Balanced Sparsest Partitioning Problem have weight 1, we add the SDP constraint that all vectors $\U$ 
have length 1 (see \prettyref{sec:SDP}): for all $u\in V$:
$$\|\U\|^2 = 1.$$
The intended solution satisfies this constraint. We also change the way the algorithm picks the parameters~$r_i$. The algorithm 
chooses $r_i$ so as to minimize the expansion $\phi_G(P_i)$ subject to an additional constraint $\mu(P_i) \geq (1-\varepsilon/2) \mu(S''_i)$.
Finally, once the algorithm obtains sets $P_i$, it greedily merges sets of weight at most $1/2$.
The rest of the algorithm is the same as \prettyref{alg:ksets}.

From \prettyref{eq:exp-weight} and \prettyref{eq:exp-cut}, we get
\begin{eqnarray*}
\E_r  [w(L_r)] & \geq & \frac{\e^2}{6Z}\nonumber \E_r[|E(L_r, V \setminus L_r)|]+
\frac{(1-\nicefrac{\varepsilon}{2})\mu(S''_i)}{R} \\
& \geq & \max \set{\frac{\e^2}{6Z}\nonumber \E_r[|E(L_r, V \setminus L_r)|],
\frac{(1-\nicefrac{\varepsilon}{2})\mu(S''_i)}{R}}. 
\end{eqnarray*}
Since $\|\U\|^2=1$ for all $u\in V$, we have $R=1$ and $\mu(L_r) = w(L_r)$.
Therefore,
$$\E_r [w(L_r)]
\geq\max\big\{\frac{\e^2}{6Z}\nonumber \E_r[|E(L_r, V \setminus L_r)|],
(1-\frac{\varepsilon}{2})\mu(S''_i)\big\},
$$
and for some $r^*$,
\begin{align*}
w(L_{r^*}) &\geq 
\frac{\e^2}{6Z}\nonumber \E_r[|E(L_{r^*}, V \setminus L_{r^*})|];\\
\mu(L_{r^*}) &\geq (1-\frac{\varepsilon}{2})\mu(S''_i).
\end{align*}
Consequently, we get
$$\phi_G(P_i)\leq \phi_G(L_{r^*}) \leq \frac{6Z}{\e^2}.$$
Now, recall, that by~(\ref{eq:sum-lambda}), $\sum_{i\in\cal I} \lambda_i\geq 1 -\varepsilon/2$. Hence,
\begin{eqnarray*}
\sum_{i\in\cal I} w(P_i) &=& \sum_{i\in\cal I} \mu(P_i) 
\geq (1-\varepsilon/2) \sum_{i\in\cal I} \mu(S''_i)
 =  (1-\varepsilon/2) \sum_{i\in\cal I} k\lambda_i\\
&\geq& (1-\varepsilon) k.
\end{eqnarray*}

We showed that the algorithm gets sets $P_i$ satisfying the following properties: (a) the expansion $\phi_G(P_i)\leq \frac{6Z}{\e^2}$; (b)
$w(P_i)\leq (1+\varepsilon/2)$ and (c) $\sum_i w(P_i)\geq (1- \varepsilon)k$. To get sets of weight in the range $[1/2,1+\varepsilon]$
the algorithm greedily merges sets $P_i$ of weight at most $1/2$ and obtains a collection of new sets, which we denote by $Q_i$.
The algorithm outputs all sets $Q_i$ with weight at least $1/2$.

Note that for any two disjoint sets $A$ and $B$, $\phi_G(A\cup B)\leq \max\{\phi_G(A), \phi_G(B)\}$. 
So $\phi_G(Q_i)\leq \max_j \phi_G(P_j)\leq \frac{6Z}{\e^2}$. All sets $Q_i$ but possibly one have weight at least $1/2$. So the weight of sets $Q_i$ output by the algorithm is at least $(1-\varepsilon)k -1/2$. The maximum weight
of sets $Q_i$ is $1+\varepsilon/2$, so the number of sets $Q_i$ is at least
$$\ceil{\frac{(1-\varepsilon)k -1/2}{1-\varepsilon/2}}
\geq \ceil{(1-2\varepsilon)k -1/2}\geq \ceil{(1-4\varepsilon)k}.$$
To verify the last inequality check two cases: if $2\varepsilon k\geq 1/2$, then $(1-2\varepsilon)k -1/2 \geq (1-4\varepsilon)k$; if $2\varepsilon k < 1/2$, then $\ceil{(1-2\varepsilon)k -1/2} = k$.
This finishes the proof.
\end{proof}

\section{From Disjoint Sets to Partitioning}\label{sec:partit}
We now show how given $k'\geq (1-\e)$ sets $P_1,\dots,P_{k'}$, we  can obtain
a true partitioning $P'_1,\dots,P'_{k''}$ of $V$.

\begin{proof}[Proof of \prettyref{thm:kpart-alg-1}]
To get the desired partitioning, we first run \prettyref{alg:ksets} several times (say, $n$) to obtain 
disjoint non-empty sets $P_1,\dots, P_{k'}$ that satisfy $\max_i \phi_G(P_i) \leq \bigO_{\e} (\sqrt{\log n \log k})\; OPT$
w.h.p. Let $Z = \max_i \phi_G(P_i)$. We sort sets $P_i$ by weight $w(P_i)$. We output the smallest 
$k'' = \floor{(1-\e) k'}$ sets $P_i$, and the compliment set $P' = V\setminus (\cup_{1\leq i\leq k''} P_i)$.

Since sets $P_i$ are disjoint and non-empty, the first $k''$ sets $P_i$ and the set $P'$ are also disjoint and non-empty. Moreover,
$\phi_G(P_i)\leq Z$, so we only need to show that $\phi_G(P')\leq \bigO_{\e}(Z)$. Note, that $w(P')\geq \e w(V)$, since $P'$ 
contains vertices in the $\ceil{\e k}$ largest sets $P_i$ and all vertices not covered by sets $P_i$.
Then, 
\[ E(P', V\setminus P') = \cup_{i\leq k''} E(P', P_i) \subset \cup_{i\leq k''} E(P_i, V\setminus P_i) \mper \] So
\begin{align*}
\phi_G(P')&=\frac{|E(P', V\setminus P')|}{w(P')} 
\leq \frac{\sum_{i=1}^{k''} E(P_i, V\setminus P_i)}{w(P')} \\
&= \frac{\sum_{i=1}^{k''} w(P_i)\phi_G(P_i)}{\e w(V)}
\leq 
\frac{\sum_{i=1}^{k''} w(P_i)Z}{\e w(V)} \\
& \leq \frac{Z w(V)}{\e w(V)} = \frac{Z}{\e}.
\end{align*}
This concludes the proof.
\end{proof}

\section{Acknowledgements}
The first author would like to thank Prasad Raghavendra, Piyush Srivastava and Santosh Vempala for helpful discussions.

\bibliography{smallset}
\bibliographystyle{amsalpha}
\appendix
\section{Proof of \prettyref{thm:kpart-alg-2}}
\label{app:kpart-alg-2}
The proof of \prettyref{thm:kpart-alg-2} is almost the same as the proof of \prettyref{thm:kpart-alg-1}. The only difference is that
we need to replace orthogonal separators with a slightly different variant of orthogonal separators (implicitly defined in~\cite{cmm06b}).

{\em {\bf Orthogonal Separators with $\ell_2$ distortion.}}
Let $X$ be a set of unit vectors in $\ell_2$. We say that a distribution over subsets of $X$ is a $k$-orthogonal
separator of $X$ with $\ell_2$ distortion $D$, probability scale $\alpha >0$ and separation threshold $\beta<1$, if the following
conditions hold for $S \subset X$ chosen according to this distribution:
\begin{enumerate}
\item For all $\U \in X$, $\Pr{ \U \in S} = \alpha$.

\item For all $\U, \V \in X$ with $\inprod{\U,\V} \leq \beta \max \set{ \norm{\U}^2, \norm{\V}^2 }$,

$$\Pr{\U \in S \textrm{ and } \V \in S} \leq \frac{\alpha}{k} .$$

\item For all $u,v \in X$,
$$\Pr{ I_S(\U) \neq I_S(\V) } \leq \alpha D \|\U - \V\|. $$
\end{enumerate}

\begin{theorem}[\cite{cmm06b}]\label{thm:ort-sep-prime}
There exists a polynomial-time randomized algorithm that given a set
of unit vectors $X$, a parameter $k$, and $\beta < 1$ generates a $k$-orthogonal separator with $\ell_2$ distortion
$D = \bigO_{\beta} \left( \sqrt{\log k}   \right)$
and scale $\alpha \geq 1/n$.
\end{theorem}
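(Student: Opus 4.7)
The plan is to use a single Gaussian hyperplane cut, specialized to the unit-vector setting. Sample a standard Gaussian vector $g$ in the ambient Euclidean space of $X$, fix a threshold $t = C_\beta\sqrt{\log k}$ with $C_\beta = \Theta(1/\sqrt{1-\beta})$ to be chosen below, and output
\[ S = \{u \in X : \inprod{g,\U} \geq t\}. \]
I will verify all three conditions with this single choice of $t$, so that no chaining or ARV-style structure is needed in this regime.

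Property~1 is immediate by rotational invariance: since $\|\U\|=1$, $\inprod{g,\U} \sim N(0,1)$, so $\Pr{\U \in S} = \bar\Phi(t) =: \alpha$ is the same for every $u$. For Property~2, fix $\U,\V \in X$ with $\rho := \inprod{\U,\V} \leq \beta$; then $(\inprod{g,\U},\inprod{g,\V})$ is bivariate normal with correlation $\rho$, and conditioning on $\inprod{g,\U} = s \geq t$ and integrating yields the standard bivariate tail estimate
\[ \Pr{\U,\V \in S} \leq \bar\Phi(t)\cdot \bar\Phi\bigl(t\sqrt{(1-\beta)/(1+\beta)}\bigr), \]
which is at most $\alpha/k$ once $C_\beta$ is chosen large enough.

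Property~3 is the core step. Since $\|\U\|=\|\V\|$, the vectors $\U+\V$ and $\U-\V$ are orthogonal, so $A := \inprod{g,(\U+\V)/\|\U+\V\|}$ and $B := \inprod{g,(\U-\V)/\|\U-\V\|}$ are independent standard normals, and $\inprod{g,\U} = \tfrac12(A\|\U+\V\| + B\|\U-\V\|)$ and $\inprod{g,\V} = \tfrac12(A\|\U+\V\| - B\|\U-\V\|)$. The event $I_S(\U) \neq I_S(\V)$ is exactly $|A\|\U+\V\|/2 - t| \leq |B|\|\U-\V\|/2$. Conditioning on $B$, bounding the Gaussian density of $A\|\U+\V\|/2$ on the resulting window by its supremum, and taking expectation over $B$ (using $\E[|B|] = \sqrt{2/\pi}$) yields
\[ \Pr{I_S(\U)\neq I_S(\V)} \leq C\,\phi(t)\,\|\U-\V\| \]
in the regime $\|\U-\V\| \leq 1/t$, where the density is essentially constant on the window. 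Since $\phi(t)/\bar\Phi(t) = \Theta(t)$, this gives $\alpha \cdot O_\beta(\sqrt{\log k})\,\|\U-\V\|$, the claimed distortion. In the complementary regime $\|\U-\V\| > 1/t$, the trivial bound $\Pr{I_S(\U)\neq I_S(\V)} \leq 2\alpha$ is already dominated by $\alpha D\|\U-\V\|$, so Property~3 holds throughout.

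The main obstacle is reconciling these two regimes with a single threshold $t$: the density linearization fails once the window length $|B|\|\U-\V\|$ becomes comparable to $1/t$, so a case split on $\|\U-\V\|$ versus $1/t$ is unavoidable. As for the scale condition $\alpha \geq 1/n$, one has $\alpha = \bar\Phi(t) \geq k^{-O_\beta(1)}/\sqrt{\log k}$, which is polynomially bounded below whenever $\beta$ is bounded away from $1$ (the only regime invoked by \prettyref{thm:kpart-alg-2}); if the exact bound $\alpha \geq 1/n$ is required one may mix the Gaussian construction with probability $1/2$ against the uniform distribution on singletons of $X$, which contributes $\Theta(1/|X|)$ to $\alpha$ while adding only $O(1/|X|^2)$ to the joint probability in Property~2 and $O(1/|X|)$ to the boundary in Property~3, absorbed into $\alpha D\|\U-\V\|$ after adjusting constants.
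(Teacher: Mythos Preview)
Your core construction and verification of Properties~1--3 match the paper: both use a single Gaussian halfspace $S=\{\U:\inprod{\gamma,\U}\geq t\}$ with $t=\Theta_\beta(\sqrt{\log k})$. The paper handles Property~2 by the cruder bound $\Pr{\U,\V\in S}\leq\Pr{\inprod{\U+\V,\gamma}\geq 2t}=\bar\Phi\bigl(t\sqrt{2/(1+\rho)}\bigr)$ together with the inequality $\bar\Phi(\beta t)\leq\bar\Phi(t)^{\beta^2}$, which is cleaner than (and equivalent in strength to) your bivariate conditioning. For Property~3 the paper simply cites Lemma~A.2 of~\cite{cmm06b}, whose content is precisely the density-linearization argument you spell out.

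The genuine gap is in your proposed fix for the scale condition $\alpha\geq 1/n$. Mixing with the uniform distribution on singletons of $X$ destroys Property~3: whenever the singleton $\{\U\}$ or $\{\V\}$ is sampled, the indicators differ, so the mixed distribution satisfies $\Pr{I_S(\U)\neq I_S(\V)}\geq \tfrac{1}{2}\cdot\tfrac{2}{|X|}=1/|X|$ \emph{regardless of how close $\U$ and $\V$ are}. You need this to be at most $\alpha D\,\|\U-\V\|$, but with $\alpha=\Theta(1/|X|)$ and $D=\bigO_\beta(\sqrt{\log k})$ that would force $\|\U-\V\|=\Omega(1/\sqrt{\log k})$, which can certainly fail. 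No ``adjustment of constants'' repairs this, because the offending additive term is not scaled by $\|\U-\V\|$ at all.

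The paper's fix avoids this problem entirely. Order the vectors as $\U_1,\dots,\U_n$, pick a uniformly random index $\iota$, sample $\gamma$ \emph{conditionally on} $\inprod{\U_\iota,\gamma}\geq t$, form the halfspace $S'$ as before, and output $S''=S'$ if none of $\U_1,\dots,\U_{\iota-1}$ lie in $S'$, else output $S''=\varnothing$. One checks directly that $\Pr{S''=S^*}=\tfrac{1}{\alpha n}\,\Pr{S=S^*}$ for every nonempty $S^*$: the distribution over nonempty outcomes is a pure rescaling of the original, with the remaining mass placed on $\varnothing$, so all three properties transfer verbatim at the new scale $\alpha'=1/n$. (Your side remark that the raw $\alpha=k^{-\bigO_\beta(1)}$ is already polynomially bounded below and hence suffices for the application in \prettyref{thm:kpart-alg-2} is correct; the rescaling trick is only needed to match the theorem statement exactly.)
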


For completeness we sketch the proof of this lemma in \prettyref{sec:ort-sep-prime}. \prettyref{alg:ksets}$'$ is the same
as \prettyref{alg:ksets} except that at Step 3, it samples orthogonal separators with 
$\ell_2$ distortion $\bigO_{\e}(\sqrt{\log k})$ using \prettyref{thm:ort-sep-prime}. 
The proof of \prettyref{thm:kpart-alg-1} goes through for the new algorithm essentially as is. The only 
statement we need to take care of is \prettyref{lem:SP-prop}~(a). We prove the following
bound on $\Ex{\sum_i \nu(S''_i)}$.

\begin{lemma}
The sets $S_i''$ satisfy the following condition:
$\Ex{\sum_i \nu(S''_i)} \leq (8D +1)k \cdot \sqrt{SDP}$, where 
$D=\bigO_{\e}(\sqrt{\log k})$ is the $\ell_2$ distortion of $(12k/\e)$-orthogonal separator, and $SDP$ is the value
of the SDP solution.
\end{lemma}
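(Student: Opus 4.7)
The plan is to mimic the proof of \prettyref{lem:SP-prop}~(a), with two key modifications. First, substitute the new $\ell_2$ (rather than $\ell_2^2$) distortion estimate into the ``cut-edge'' probability bound. Second, insert an application of Cauchy--Schwarz at the right moment to convert $SDP$ into $\sqrt{SDP}$, using the identity
\[ \sum_{(u,v)\in E}\bigl(\|\U\|^2+\|\V\|^2\bigr) = \sum_{u\in V} d_u\|\U\|^2 = k, \]
which holds because $w_u=d_u$ and the SDP constraint $\sum_{u} w_u\|\U\|^2=k$.

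As in the original argument, I split $\sum_i\nu(S''_i)$ into a contribution from edges $(u,v)$ cut by the covering $\{S''_1,\dots,S''_T,\,V\setminus\bigcup_i S''_i\}$ (each such edge contributing $\|\U\|^2+\|\V\|^2$) and a contribution from uncut edges in a common $S''_i$ (each contributing at most $|\|\U\|^2-\|\V\|^2|$). For the uncut term I factor
\[ |\|\U\|^2-\|\V\|^2| \leq \|\U-\V\|\cdot\bigl(\|\U\|+\|\V\|\bigr) \]
and then apply Cauchy--Schwarz: the first sum is bounded by $\sqrt{k\cdot SDP}$ (since $\sum_{(u,v)\in E}\|\U-\V\|^2=k\cdot SDP$) and the second, using $(\|\U\|+\|\V\|)^2\leq 2(\|\U\|^2+\|\V\|^2)$ together with the identity above, is $O(\sqrt{k})$, giving a total of $O(k\sqrt{SDP})$.

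For the cut term I combine Property~3 of the new orthogonal separators with the bound $\|\psi(\U)-\psi(\V)\|^2\leq 2\|\U-\V\|^2/\max\{\|\U\|^2,\|\V\|^2\}$ from~\cite{cmm06b} to obtain
\[ \Pr{I_{S_i}(\psi(\U))\neq I_{S_i}(\psi(\V))} \leq \alpha D\,\|\psi(\U)-\psi(\V)\| \leq \frac{\sqrt{2}\,\alpha D\,\|\U-\V\|}{\max\{\|\U\|,\|\V\|\}}. \]
The same geometric-series argument as in \prettyref{lem:SP-prop}~(a) (combined with \prettyref{lem:min-prob}) then gives $\Pr{(u,v)\in E_{cut}}\leq O\bigl(D\,\|\U-\V\|/\max\{\|\U\|,\|\V\|\}\bigr)$. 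Multiplying by $\|\U\|^2+\|\V\|^2\leq 2\max\{\|\U\|^2,\|\V\|^2\}$, the expected cut contribution is at most $O(D)\sum_{(u,v)\in E}\|\U-\V\|\cdot\max\{\|\U\|,\|\V\|\}$. A Cauchy--Schwarz then bounds this by
\[ O(D)\sqrt{\sum_{(u,v)\in E}\|\U-\V\|^2}\cdot\sqrt{\sum_{(u,v)\in E}\max\{\|\U\|^2,\|\V\|^2\}} \leq O(D)\sqrt{k\cdot SDP}\cdot \sqrt{k} = O(Dk\sqrt{SDP}). \]

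The main obstacle is matching exponents on $SDP$ and $k$ correctly: the $\ell_2$ (rather than $\ell_2^2$) distortion produces only a linear $\|\U-\V\|$ in the cut-edge probability, so one is forced to split $\sum_E \|\U-\V\|\cdot\max\{\|\U\|,\|\V\|\}$ via Cauchy--Schwarz, extracting $\sqrt{SDP}$ in exchange for an extra $\sqrt{k}$. That extra $\sqrt{k}$ is precisely what the identity $\sum_E(\|\U\|^2+\|\V\|^2)=k$ supplies, and it is the reason the improved $\sqrt{OPT\log k}$ guarantee of \prettyref{thm:kpart-alg-2} requires the degree-weighted setting $w_u=d_u$ rather than arbitrary vertex weights.
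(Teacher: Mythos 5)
Your proof is correct and follows essentially the same route as the paper's: decompose $\sum_i\nu(S''_i)$ into cut and uncut contributions, bound the cut probability via Property~3 of the $\ell_2$ orthogonal separator together with the geometric series from \prettyref{lem:min-prob}, and then apply Cauchy--Schwarz using $\sum_{(u,v)\in E}(\|\U\|^2+\|\V\|^2)=\mu(V)=k$ (valid since $w_u=d_u$) to extract $\sqrt{SDP}$. Your only minor deviation is in the uncut term, where you apply a second Cauchy--Schwarz to $\|\U-\V\|\,(\|\U\|+\|\V\|)$, whereas the paper reuses the $k\cdot SDP$ bound from \prettyref{lem:SP-prop} and then invokes $SDP\leq 1$; both yield $O(k\sqrt{SDP})$, so the difference is immaterial.
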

\begin{proof}
Let $E_{cut}$ be the set of edges cut by the partitioning $S_1'',\dots, S_T'', V\setminus (\cup S_i'')$. 
As before (in \prettyref{lem:SP-prop}), we have
\begin{eqnarray*}
\Ex{\sum_i \nu(S_i'')} 
& \leq &  \Ex{\sum_{(u,v)\in E_{cut}} (\|\U\|^2 + \|\V\|^2)} + \sum_{(u,v)\in E} |\|\U\|^2 - \|\V\|^2|  \\
&\leq & \Ex{\sum_{(u,v)\in E_{cut}} (\|\U\|^2 + \|\V\|^2)} + k\, SDP \mper 
\end{eqnarray*}

We now bound the first term. 
Estimate the probability that an edge $(u,v)$ is cut. Let $U_t = \cup_{i\leq t}S'_i$ be the set of vertices covered by the first $t$ sets $S'_i$. Note, that $S''_i = S'_i\setminus U_{i-1}$.
We say that the edge $(u,v)$ is cut by the set $S'_t$, if $S'_t$ is the first set containing $u$ or $v$, and it contains only one of these vertices.
Then,
\begin{eqnarray*}
 \Pr {(u,v)\in E_{cut}} & = &  \sum_i \Pr {(u,v)\text{ is cut by }S'_i\,} \\ 
& = &  \sum_{i} \Pr {u,v \notin U_{i-1} \text{ and } I_{S'_i}(u) \neq I_{S'_i}(v)} \\
&\leq &  \sum_{i} \Pr {u \notin U_{i-1} \text{ and } I_{S_i}(u) \neq I_{S_i}(v)} \\
& = & \sum_{i} \Pr {u \notin U_{i-1}} \Pr{I_{S_i}(u) \neq I_{S_i}(v)}. 
\end{eqnarray*}
Now, by \prettyref{lem:min-prob}, $\Pr {u \notin U_{i-1}}\leq (1-\alpha/2)^{i-1}$, and, by Property 3 of $\ell_2$ orthogonal separators,
\[ \Pr{I_{S_i}(u) \neq I_{S_i}(v)} \leq  \alpha D \norm{\psi(u)-\psi(v)} 
\leq  \alpha D \frac{\sqrt{2}\,\norm{\U - \V} }{\max \set{ \norm{\U}, \norm{\V} }} \mper \]
Thus,
$$\Pr {(u,v)\in E_{cut}} \leq \frac{2\sqrt{2}\,D\,\|\U-\V\|}{\max\{\|\U\|, \|\V\|\}}.$$
Now, the proof deviates from the proof of \prettyref{lem:SP-prop}:
\begin{eqnarray*}
\Ex{\sum_{(u,v)\in E_{cut}} (\|\U\|^2 + \|\V\|^2)} \
& = & \sum_{(u,v)\in E} \Pr {(u,v)\in E_{cut}} (\|\U\|^2 + \|\V\|^2) \\
& \leq & \sum_{(u,v)\in E} \frac{2\sqrt{2}\,D\,\|\U-\V\|}{\max\{\|\U\|, \|\V\|\}} \cdot 
(\|\U\|^2 + \|\V\|^2)  \\
&\leq & 2\sqrt{2}\,D\, \sum_{(u,v)\in E} \|\U-\V\|\cdot(\|\U\| + \|\V\|).  
\end{eqnarray*}
By Cauchy--Schwarz,
\begin{eqnarray*}
 2\sqrt{2}\,D\, \sum_{(u,v)\in E} \|\U-\V\|\cdot(\|\U\| + \|\V\|) 
& \leq &  2\sqrt{2}\,D\, \Big(\sum_{(u,v)\in E} \|\U-\V\|^2\Big)^{{1/2}}  
\Big(\sum_{(u,v)\in E} (\|\U\| + \|\V\|)^2\Big)^{{1/2}}  \\
& \leq & 4\,D\, \Big(\sum_{(u,v)\in E} \|\U-\V\|^2\Big)^{{1/2}}  
\Big(\sum_{(u,v)\in E} \|\U\|^2 + \|\V\|^2\Big)^{{1/2}}  \\
&= & 4\,D\, \Big(k\, SDP\Big)^{1/2} 
\Big(\sum_{(u,v)\in E} d_u \|\U\|^2\Big)^{1/2}. 
\end{eqnarray*}

Recall, that in \prettyref{thm:kpart-alg-2}, we assume that the weight of every vertex $w_u$ equals its degree $d_u$. Hence,
$\sum_{(u,v)\in E} d_u \|\U\|^2 = \sqrt{\mu(V)} = \sqrt{k}$. We get,
\begin{eqnarray*}
\Ex{\sum_{(u,v)\in E_{cut}} (\|\U\|^2 + \|\V\|^2)} &\leq& 
4\,D\, k\sqrt{SDP}.
\end{eqnarray*}
Since $SDP\leq OPT = \phi^k_G \leq 1$ (here we use that $d_u=w_u$), $SDP\leq \sqrt{SDP}$, and
\[
\Ex{\sum_i \nu(S''_i)}  \leq   8Dk\,\sqrt{SDP} + k\, SDP 
  \leq  (8D+1)k\,\sqrt{SDP}. \]
This concludes the proof.
\end{proof}

%\pagebreak

\section{\texorpdfstring{Orthogonal Separators with $\ell_2$ Distortion}{Orthogonal Separators with L2 Distortion}}\label{sec:ort-sep-prime}
In this section, we sketch the proof of \prettyref{thm:ort-sep-prime} which is proven in~\cite{cmm06b}
as part of Lemma~4.9. Let us fix some notation. Let $\bar \Phi(t)$ be the probability that 
the standard $\cN(0,1)$ Gaussian variable is greater than $t$. We 
will use the following easy lemma from~\cite{mm12}.

\begin{lemma}[Lemma 2.1. in ~\cite{mm12}]
For every $t>0$ and $\beta \in (0,1]$, we have
$$\bar\Phi(\beta t)\leq \bar\Phi(t)^{\beta^2}.$$
\end{lemma}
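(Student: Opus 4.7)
The plan is to recast the inequality as a monotonicity assertion and then reduce it to a standard Mills-ratio bound. Taking logarithms and dividing by $(\beta t)^2>0$, the claim $\bar\Phi(\beta t)\le\bar\Phi(t)^{\beta^2}$ is equivalent to
\[
\frac{\log\bar\Phi(\beta t)}{(\beta t)^2}\;\le\;\frac{\log\bar\Phi(t)}{t^2}.
\]
Since $\beta t\le t$, it therefore suffices to prove that $h(s):=\log\bar\Phi(s)/s^2$ is nondecreasing on $(0,\infty)$.

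Next I would compute $h'$. Writing $M(s):=\phi(s)/\bar\Phi(s)$ for the Gaussian hazard rate (so $(\log\bar\Phi)'=-M$, where $\phi$ is the standard normal density), a short computation gives $h'(s)=J(s)/s^3$ with $J(s):=-2\log\bar\Phi(s)-s\,M(s)$, so the task becomes showing $J\ge 0$ on $(0,\infty)$. Taking $s\to 0^+$ one checks $J(0^+)=-2\log(1/2)=2\log 2>0$ directly (the term $sM(s)$ vanishes since $M(0)=\sqrt{2/\pi}$ is finite). Using the standard identity $M'(s)=M(s)(M(s)-s)$, differentiation yields
\[
J'(s)=M(s)\bigl(s^2+1-s\,M(s)\bigr),
\]
so $J'(s)\ge 0$ iff $M(s)\le s+1/s$, equivalently iff $\bar\Phi(s)\ge s\phi(s)/(s^2+1)$.

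This last inequality is the classical lower Mills bound, which I would verify by letting $H(s):=\bar\Phi(s)-s\phi(s)/(s^2+1)$ and noting that $H(+\infty)=0$ and, using $\phi'(s)=-s\phi(s)$, a direct quotient-rule computation collapses to $H'(s)=-2\phi(s)/(s^2+1)^2<0$; hence $H>0$ on $(0,\infty)$. Combining, $J(0^+)>0$ together with $J'\ge 0$ forces $J\ge 0$ everywhere, so $h$ is nondecreasing, which is the lemma.

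The only real obstacle is bookkeeping: one has to carefully track signs through two successive logarithmic manipulations, a division by $(\beta t)^2$, and two differentiations, since everything in sight (including $\log\bar\Phi$) is negative. The genuinely Gaussian-specific ingredient enters only once, in the Mills-ratio lower bound $\bar\Phi(s)\ge s\phi(s)/(s^2+1)$; everything else is one-variable calculus.
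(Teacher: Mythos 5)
Your proof is correct, and I checked each computation: the reduction to monotonicity of $h(s)=\log\bar\Phi(s)/s^2$ via dividing the logarithmic inequality by $(\beta t)^2>0$ is sound (both sides of the log inequality are negative, but the division is by a positive quantity so the direction is preserved); the formulas $h'(s)=J(s)/s^3$ with $J=-2\log\bar\Phi-sM$, the limit $J(0^+)=2\log 2$, the hazard-rate identity $M'=M(M-s)$, and the resulting $J'(s)=M(s)\bigl(s^2+1-sM(s)\bigr)$ all check out; and the quotient-rule computation for $H(s)=\bar\Phi(s)-s\phi(s)/(s^2+1)$ really does collapse to $H'(s)=-2\phi(s)/(s^2+1)^2$, which together with $H(+\infty)=0$ gives the lower Mills bound $\bar\Phi(s)> s\phi(s)/(s^2+1)$. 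Note, however, that the paper does not actually prove this lemma; it cites Lemma~2.1 of \cite{mm12} and uses it as a black box, so there is no in-paper proof to compare against. Your argument therefore fills a gap left to the reference: it is a self-contained, purely one-variable calculus proof whose only Gaussian-specific ingredient is the sharpened lower Mills ratio bound, and it is valid as written.
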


We now describe an algorithm for $m$-orthogonal separators with $\ell_2$ distortion (see \prettyref{app:kpart-alg-2}).
Let $\beta < 1$ be the separation threshold. Assume w.l.o.g. that all vectors $\U$ lie in~$\mathbb{R}^n$.
Fix $m' = m^\frac{1+\beta}{1-\beta}$ and $t =\bar\Phi^{-1}(1/m')$ (i.e., $t$ such that $\bar\Phi(t) = 1/m'$). Sample a random Gaussian $n$ dimensional vector $\gamma$ in $\mathbb{R}^n$. Return the set
$$S = \{\U : \inprod{ \U, \gamma} \geq t\}.$$

We claim that $S$ is an $m$-orthogonal separator with $\ell_2$ distortion $\bigO(\sqrt{\log m})$ and scale 
$\alpha = 1/m'$. 
We now verify the conditions of orthogonal separators with $\ell_2$ distortion.

1. For every $\U$, 
$$\Pr{\U \in S} = \Pr{\inprod{ \U, \gamma } \geq t} = 1/m'\equiv \alpha.$$
Here we used that $\inprod{ \U, \gamma }$ is distributed as $\cN(0,1)$, since $\U$ is a unit vector.

2. For every $\U$ and $\V$ with $\inprod{ \U,\V } \leq \beta$,
\begin{eqnarray*}
\Pr{\U, \V \in S} & =&  \Pr{\inprod{\U, \gamma} \geq t \text{ and } \inprod{\V,\gamma} \geq t} \\ 
 & \leq &  \Pr{\inprod{ \U +\V, \gamma} \geq 2t}.
\end{eqnarray*}
Note that $\|\U +\V\|=\sqrt{2+2\inprod{ \U, \V}}$, hence 
$(\U +\V)/\sqrt{2+2\inprod{ \U, \V}}$ is a unit vector. We have
\begin{eqnarray*}
 \Pr{\U, \V \in S} 
& \leq & \ProbOp\Big[\inprod{ \frac{\U +\V}{\sqrt{2+2\inprod{\U,\V} }}, \gamma } \geq \frac{2t}{\sqrt{2+2\inprod{\U,\V} }}\Big] \\
& = & \bar\Phi \Big(\frac{\sqrt{2}t}{\sqrt{1+\inprod{ \U, \V}}}\Big) \leq \bar\Phi \Big(\frac{\sqrt{2}t}{\sqrt{1+\beta}}\Big) \leq 
\bar\Phi(t)^{\frac{2}{1+\beta}} \\
& = & \Big(\frac{1}{m'}\Big)^{\frac{2}{1+\beta}} 
= \frac{1}{m'}\cdot \Big(\frac{1}{m'}\Big)^{\frac{1-\beta}{1+\beta}} = \frac{\alpha}{m}. 
\end{eqnarray*}

3. The third property directly follows from Lemma A.2. in~\cite{cmm06b}.

\medskip

We note that this proof gives probability scale $\alpha = m^{-\frac{1+\beta}{1-\beta}}$. So, for some $\beta$,
we may get $\alpha \ll 1/n$. However, it is easy to sample $\gamma$ in such a way that 
$\Pr{\inprod{ \U, \gamma} \geq 1/n}$ for every vector $\U$ in our set. To do so, 
we order vectors $\{\U\}$ in an arbitrary way: $\U_1,\dots, \U_n$. Then, we pick a random index 
$\iota\in\{1,\dots, n\}$, and sample a random Gaussian vector $\gamma'$ conditional on $\inprod{ \U_{\iota}, \gamma' } \geq t$. We set $S' = \{\U : \inprod{ \U, \gamma'} \geq t\}$ as in the algorithm above.
Note that $\U_\iota$ always belongs to $S'$.
We output $S''=S'$ if $S'$ does not contain vectors $\U_1,\dots, \U_{\iota-1}$; and we output
$S''=\varnothing$ otherwise. It is easy to verify that $\Pr{\U \in S''}=1/n$ for every $\U$, and, furthermore,
for every non-empty set $S^*\neq \varnothing$,
$$\Pr{S'' = S^*}= \frac{1}{\alpha n}\Pr{S = S^*},$$
where $S$ is the orthogonal separator from the proof above. So all properties of orthogonal
separators hold for $S''$ with $\alpha' = \alpha/(\alpha n) = 1/n$.

\section{Proof of \prettyref{prop:cheeger-partition}}
\label{app:cheeger-partition}

We restate \prettyref{prop:cheeger-partition} below. 

\begin{proposition}
Given a graph $G$ and a parameter $k$,
\[ \lambda_k \leq \phik \leq \bigO_{\e}\left(\sqrt{ \lambda_{(1+\e)k} \log k }\right) \mper \] 
for every $\e>0$. 
\end{proposition}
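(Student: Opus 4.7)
The proposition splits into two inequalities, with the upper bound being the substantive direction. For the lower bound $\lambda_k \leq \phi_G^k$, I plan to use the standard Courant--Fischer argument specialized to partitions. Given any $k$-partition $P_1,\dots,P_k$, the normalized indicator vectors $v_i \defeq D^{1/2}\mathbf{1}_{P_i}/\sqrt{w(P_i)}$ are orthonormal in $\ell^2(V)$ and span a $k$-dimensional subspace $W$. The min--max characterization of eigenvalues of the normalized Laplacian $\mathcal{L}$ gives $\lambda_k \leq \max_{0\neq v \in W} v^T \mathcal{L} v / \norm{v}^2$, and expanding $v = \sum_i c_i v_i$ and applying $(c_i-c_j)^2 \leq 2(c_i^2+c_j^2)$ bounds this Rayleigh quotient by a constant multiple of $\max_i \phi_G(P_i)$. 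Taking $P_1,\dots,P_k$ to be an optimal $k$-partition yields $\lambda_k \leq \phi_G^k$, with the constant absorbed into the normalization convention for $\mathcal{L}$.

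For the upper bound, my plan is to compose two ingredients that are already in the paper's orbit. The first is the higher-order Cheeger inequality of Lee, Oveis Gharan and Trevisan~\cite{lot12} (which is quoted in the introduction): for every $\delta>0$ and integer $m\geq 2$ there exist $m$ disjoint non-empty sets $S_1,\dots,S_m\subset V$ with $\max_i \phi_G(S_i) \leq \bigO_\delta(\sqrt{\lambda_{(1+\delta)m} \log m})$. I apply this with $\delta = \e/3$ and $m = \lceil k/(1-\e/3)\rceil$, so that $(1+\delta)m \leq (1+\e)k$ for small enough $\e$ and $m=\Theta_\e(k)$. This produces $m\geq k$ disjoint sets with $\max_i \phi_G(S_i) \leq \bigO_\e(\sqrt{\lambda_{(1+\e)k}\log k})$.

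The second ingredient is the disjoint-to-partition reduction already developed in \prettyref{sec:partit}. Applying it to the $m$ sets above: sort them by weight, keep the smallest $\floor{(1-\e/3)m}$ as parts of the output partition, and collect the remaining $\lceil \e m/3\rceil$ (largest) sets together with the uncovered vertices into a complement set $P'$. The calculation from \prettyref{sec:partit} shows $w(P')\geq (\e/3)\,w(V)$, so $\phi_G(P')\leq (3/\e)\max_i \phi_G(S_i)$, while each of the retained $S_i$ trivially keeps its expansion bound. This yields a partition of $V$ into at least $\floor{(1-\e/3)m}+1\geq k$ pieces, each of expansion $\bigO_\e(\sqrt{\lambda_{(1+\e)k}\log k})$. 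If strictly more than $k$ pieces result, I merge arbitrary pairs until exactly $k$ remain; since $\phi_G(A\cup B)\leq \max(\phi_G(A),\phi_G(B))$ for disjoint $A,B$, the expansion bound is preserved, giving $\phi_G^k\leq \bigO_\e(\sqrt{\lambda_{(1+\e)k}\log k})$.

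The main thing to be careful about is the parameter tuning: the $\e/3$ slack used in \cite{lot12} and the $\e/3$ slack lost in the partitioning step must compose so that the final eigenvalue index sits inside $(1+\e)k$. With the choices above this is immediate, and I anticipate no other technical obstacle; the key technical content of both steps is black-boxed from \cite{lot12} and \prettyref{sec:partit}.
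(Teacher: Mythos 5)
Your proof is correct and follows essentially the same route as the paper's: invoke \cite{lot12} for a super-collection of disjoint low-expansion sets (using the $(1+\delta)$-slack in the eigenvalue index), then apply the sort-and-complement reduction from \prettyref{sec:partit} to turn the collection into a true partition. The one place where you are actually slightly more careful than the paper is the final merge step --- the paper's proof as written outputs $k+1$ pieces (the smallest $k$ sets plus $P'$), which is not literally a $k$-partition, whereas you explicitly merge excess pieces using $\phi_G(A\cup B)\le\max\{\phi_G(A),\phi_G(B)\}$; for the lower bound $\lambda_k\le\phik$, the paper also does not give a proof (it only cites \cite{lrtv12} in the introduction, which is an $\Omega(\lambda_k)$ bound), and your Courant--Fischer sketch similarly yields $\lambda_k\le 2\phik$ up to the normalization convention, which matches the level of rigor the paper itself applies to that direction.
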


\begin{proof}
\cite{lot12} show that there exist disjoint non-empty sets $P_1,\dots, P_{k'}$ that satisfy 
$\max_i \phi_G(P_i) \leq \bigO_{\e} (\sqrt{\lambda_{(1+\e) k} \log k})$ for $k' \geq k(1 + \e/2)$. 
Let $Z = \max_i \phi_G(P_i)$. We sort sets $P_i$ by weight $w(P_i)$. We output the smallest 
$k$ sets $P_i$, and the compliment set $P' = V\setminus (\cup_{1\leq i\leq k} P_i)$.

Since sets $P_i$ are disjoint and non-empty, the first $k$ sets $P_i$ and the set $P'$ are also disjoint and non-empty. Moreover,
$\phi_G(P_i)\leq Z$, so we only need to show that $\phi_G(P')\leq \bigO_{\e}(Z)$. Note, that $w(P')\geq \e w(V)$, since $P'$ 
contains vertices in the $\ceil{\e k}$ largest sets $P_i$ and all vertices not covered by sets $P_i$.
Then, 
\[ E(P', V\setminus P') = \cup_{i\leq k} E(P', P_i) \subset \cup_{i\leq k} E(P_i, V\setminus P_i) \mper \] So
\begin{align*}
\phi_G(P')&=\frac{|E(P', V\setminus P')|}{w(P')} 
\leq \frac{\sum_{i=1}^{k} E(P_i, V\setminus P_i)}{w(P')} \\
&= \frac{\sum_{i=1}^{k} w(P_i)\phi_G(P_i)}{\e w(V)}
\leq 
\frac{\sum_{i=1}^{k} w(P_i)Z}{\e w(V)} \\
& \leq \frac{Z w(V)}{\e w(V)} = \frac{Z}{\e}
.
\end{align*}
This concludes the proof.

\end{proof}

\section{Integrality Gap for the Assignment SDP}
\label{app:ug-ig}

In this Section, we show that the  standard {\em Assignment SDP} has high integrality gap.

\begin{figure}[ht]
\begin{tabularx}{\columnwidth}{|X|}
\hline
\[ \min \alpha \]
\begin{eqnarray*}
\sum_{(u,v) \in E} \norm{\U_i - \V_i}^2 & \leq & \alpha \sum_{u \in V} w_u \norm{\U_i}^2 \qquad \forall i \in [k] \\
\sum_{i \in [k]} \norm{\U_i}^2 & = & 1 \\
\inprod{\U_i,\U_j} & = & 0 \qquad \forall i \neq j \textrm{ and } \forall u \in V \\
\inprod{\sum_i \U_i,I} & = & 1 \\
\norm{I}^2 & = & 1\\
\end{eqnarray*}
\\
%This SDP can be solved by guessing the optimal value of $\alpha$.
\hline 
\end{tabularx}
\caption{Assignment SDP}
\label{sdp:assignment}
\end{figure}

\begin{proposition}
\prettyref{sdp:assignment} has an unbounded integrality gap. 
\end{proposition}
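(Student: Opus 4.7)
The plan is to exhibit a single graph $G$ together with a feasible SDP solution whose objective is $0$, while showing that the integral optimum of the $k$-partitioning problem on $G$ is bounded away from $0$. The key observation is that the orthogonality constraint $\inprod{\U_i,\U_j}=0$ only relates vectors indexed by the same vertex $u$ (there is no constraint tying $\U_i$ to $\V_j$ for $u\neq v$), so we are free to use the same $k$ vectors at every vertex of the graph.

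First I would construct the SDP solution as follows. Fix any orthogonal family $v_1,\ldots,v_k$ with $\norm{v_i}^2=1/k$, set $\U_i := v_i$ for every vertex $u\in V$ and every $i\in [k]$, and set $I := v_1+\cdots+v_k$. Then I would verify the SDP constraints one by one: since $\U_i=\V_i$ on every edge, the first constraint holds with objective value $\alpha=0$; the assignment constraint gives $\sum_i \norm{\U_i}^2 = k\cdot (1/k) = 1$; the orthogonality constraint is satisfied by construction; and by orthogonality of the $v_i$ we have $\norm{I}^2=\sum_i\norm{v_i}^2=1$ and $\inprod{\sum_i\U_i,I} = \norm{I}^2 = 1$. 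Hence the SDP value is $0$ on every graph $G$.

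Next I would pick a graph whose integer optimum is positive. The simplest choice is $G=K_n$, the complete graph on $n$ vertices, with $n\geq k\geq 2$ and vertex weights $w_u = d_u = n-1$. For any $k$-partition $P_1,\ldots,P_k$, the smallest piece has $|P_i|\leq n/k$, and a direct calculation in $K_n$ gives $\phi_G(P_i) = (n-|P_i|)/(n-1) \geq (1-1/k)\cdot n/(n-1) \geq 1/2$. Therefore $\phik\geq 1/2>0$ while the SDP value is $0$, establishing an infinite, and in particular unbounded, integrality gap.

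There is no real technical obstacle here: the entire content of the argument is the observation that the orthogonality constraints of the Assignment SDP are per-vertex, which allows a constant-across-vertices SDP solution that trivializes the objective. Once that is seen, the constraint verification is routine and the choice of graph is arbitrary among those with positive integer optimum.
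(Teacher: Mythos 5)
Your proof is correct and follows the same overall strategy as the paper's: exhibit a feasible SDP solution with objective value $0$ on a graph whose integer optimum is bounded away from $0$. However, you simplify and sharpen the construction. The paper uses $G_n = K_{\lfloor n/2\rfloor}\sqcup K_{\lceil n/2\rceil}$ and assigns different vectors to the vertices of the two cliques (supported on disjoint halves of the coordinates), whereas you observe that because all the nontrivial constraints (orthogonality, the normalization $\sum_i\norm{\U_i}^2=1$, and the alignment with $I$) are per-vertex, you can simply use the \emph{same} orthogonal family $v_1,\dots,v_k$ at every vertex. This makes the SDP objective identically $0$ on \emph{every} graph, which is a slightly stronger observation than the paper explicitly makes, and reduces the choice of graph to a triviality — any graph with positive $k$-sparsity works; you pick $K_n$ and do a one-line computation showing $\phi^k_{K_n}\geq 1/2$. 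The paper's two-clique construction is no more complicated to verify, but your version makes the underlying defect of the assignment SDP more transparent: none of its constraints distinguish between vertices at all.
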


\begin{proof}
Consider the following infinite family of graphs $\cG = \set{G_n : n \geq 0}$.
$G_n$ consists of the two disjoint cliques of size $C_1 = K_{\floor{n/2}}$ and $C_2 = K_{\ceil{n/2}}$. 
It is easy to see that for $\phi^k(G_n) = \Omega(1)$ for $k > 2$.

For the sake of simplicity, let us assume that $k$ is a multiple of $2$. 
Let $e_1, \ldots, e_{k/2}$ be the standard basis vectors. 
Consider the following vector solution to \prettyref{sdp:assignment}.
\[  \U_i = \begin{cases}   
		\sqrt{\frac{2}{k}}  e_i   & \textrm{ if } u \in C_1 \textrm{ and } i \leq k/2 \\  
		\sqrt{\frac{2}{k}}  e_{(i-k/2)}   & \textrm{ if } u \in C_2 \textrm{ and } i > k/2 \\
		0	& \textrm{ otherwise} \\
 \end{cases}  \]
and
\[ I = \sqrt{\frac{2}{k}} \sum_{i=1}^{k/2} e_i \mper \]
It is easy to verify that this is a feasible solution with $\alpha = 0$. 
Therefore, \prettyref{sdp:assignment} has an unbounded integrality gap.

\end{proof}

\end{document}